\documentclass[]{article}
\usepackage{graphics}
\usepackage{vaucanson-g}
\usepackage{xcolor}
\usepackage{verbatim}

\newtheorem{theorem}{Theorem}
\newtheorem{proposition}[theorem]{Proposition}
\newtheorem{corollary}[theorem]{Corollary}
\newtheorem{lemma}[theorem]{Lemma}
\newtheorem{remark}[theorem]{Remark}

\newtheorem{conjecture}{Conjecture}
\newtheorem{definition}{Definition}
\newenvironment{proof}{\noindent\textbf{Proof.}}{\hfill\rule{2mm}{2mm}\medskip}

\def\dd{\mathinner{\ldotp\ldotp}}       

\newcommand{\RR}{\mathrm{Oroot}}
\newcommand{\LR}{\mathrm{Lroot}}
\newcommand{\rO}{\mathrm{\mbox{\scriptsize O}}}
\newcommand{\rr}{\mathrm{\mbox{\small r}}}
\newcommand{\rL}{\mathrm{\mbox{\scriptsize L}}}
\renewcommand{\fbox}[1]{#1}
\newcommand{\bigo}{{\mathcal O}}

\title{On the density of Lyndon roots in factors}
\author{%
Maxime Crochemore%
\thanks{King's College London 
 and Universit\'e Paris-Est.
 \texttt{Maxime.Crochemore@kcl.ac.uk}}
\and
Robert Merca\c{s}%
\thanks{Kiel University and King's College London.
 \texttt{robertmercas@gmail.com}}
}

\begin{document}

\maketitle

\section{Introduction}

The concept of a run 
 coined by Iliopoulos et al. \cite{Iliopoulos&Moore&Smyth:1997} when analysing
 repetitions in Fibonacci words, has been introduced to represent in a succinct manner
 all occurrences of repetitions in a word.
It is known that there are only $\bigo(n)$ many of them in a word of length $n$
from Kolpakov and Kucherov~\cite{Kolpakov&Kucherov:1999} who proved it in a
 non-constructive manner.
The first explicit bound was later on provided by Rytter~\cite{Rytter:2006}.
Several improvements on the upper bound can be found
 in~\cite{Rytter:2007,Crochemore&Ilie:2007b,PuglisiSS08:2008,Crochemore&Ilie&Tinta:2011, DezFra14}.
Kolpakov and Kucherov conjectured that this number is in fact smaller than $n$, which
 has been proved by Bannai et al.~\cite{BannaiIINTT14:2014,BaIInNaTaTs14}.
Recently, Holub~\cite{Hol15} and Fischer et al.~\cite{FisHolILew15}
 gave a tighter upper bound reaching $22n/23$.

In this note we provide a proof of the result, slightly different than the short
 and elegant proof in~\cite{BaIInNaTaTs14}.
Then we provide a relation between the border-free root conjugates of a square
 and the critical positions \cite[Chapter 8]{Lothaire:1997} occurring in it.
Finally, counting runs extends naturally to the question of their highest density,
 that is, to the question of the type of factors in which there is
 a large accumulation of runs.
This is treated in the last section.

\begin{figure}[ht]
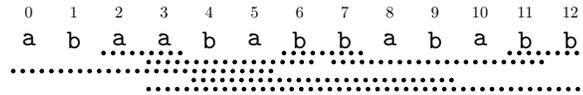

\begin{center}
\normalsize
\MediumPicture\VCDraw{%
\begin{VCPicture}{(13,5)(0,7)}
\ChgStateLineStyle{none}
\ChgStateLabelScale{.6}
\State[0]{(1,6.6)}{i0}
\State[1]{(2,6.6)}{i1}
\State[2]{(3,6.6)}{i2}
\State[3]{(4,6.6)}{i3}
\State[4]{(5,6.6)}{i4}
\State[5]{(6,6.6)}{i5}
\State[6]{(7,6.6)}{i6}
\State[7]{(8,6.6)}{i7}
\State[8]{(9,6.6)}{i8}
\State[9]{(10,6.6)}{i9}
\State[10]{(11,6.6)}{i10}
\State[11]{(12,6.6)}{i11}
\State[12]{(13,6.6)}{i12}
\ChgStateLabelScale{1}
\State[\tt a]{(1,6)}{c0}
\State[\tt b]{(2,6)}{c1}
\State[\tt a]{(3,6)}{c2}
\State[\tt a]{(4,6)}{c3}
\State[\tt b]{(5,6)}{c4}
\State[\tt a]{(6,6)}{c5}
\State[\tt b]{(7,6)}{c6}
\State[\tt b]{(8,6)}{c7}
\State[\tt a]{(9,6)}{c8}
\State[\tt b]{(10,6)}{c9}
\State[\tt a]{(11,6)}{c10}
\State[\tt b]{(12,6)}{c11}
\State[\tt b]{(13,6)}{c12}
\RstEdge\ChgEdgeArrowStyle{-}
\ChgEdgeLineWidth{3}
\ChgEdgeLineStyle{dotted}
\Point{(2.6,5.7)}{r12} \Point{(3.4,5.7)}{r22}
\Point{(4.4,5.7)}{r33} \Edge{r12}{r33}{}
\Point{(3.6,5.5)}{s23} \Point{(5.4,5.5)}{s44}
\Point{(7.3,5.5)}{s66} \Edge{s23}{s66}{}
\Point{(2.6,5.3)}{t12} \Point{(5.4,5.3)}{t44}
\Point{(0.6,5.3)}{tt} \Point{(6.4,5.3)}{t55} \Edge{tt}{t55}{}
\Point{(6.6,5.7)}{r56} \Point{(7.4,5.7)}{r66}
\Point{(8.4,5.7)}{r77} \Edge{r56}{r77}{}
\Point{(5.6,5.1)}{u45} \Point{(8.4,5.1)}{u77}
\Point{(4.6,5.1)}{u34} \Point{(10.4,5.1)}{u99} \Edge{u34}{u99}{}
\Point{(3.6,4.9)}{v23} \Point{(8.4,4.9)}{v77}
\Point{(3.6,4.9)}{v23} \Point{(13.4,4.9)}{v1212} \Edge{v23}{v1212}{}
\Point{(8.6,5.5)}{s78} \Point{(10.4,5.5)}{s99}
\Point{(7.7,5.5)}{s67} \Point{(12.4,5.5)}{s1111} \Edge{s67}{s1111}{}
\Point{(11.6,5.7)}{r1011} \Point{(12.4,5.7)}{r1111}
\Point{(13.4,5.7)}{r1212} \Edge{r1011}{r1212}{}
\end{VCPicture}}
\end{center}
\caption{Dotted lines show the $8$ runs in $\texttt{abaababbababb}$.
For example, $[7\dd 11]$ is the run of period $2$ and length $5$ associated
with factor $\texttt{babab}$.}
\label{figu-runs}
\end{figure}

Formally, a \textit{run} in a word $w$ is an interval $[i\dd j]$ of positions,
 $0\leq i < j < |w|$, for which both the associated factor $w[i\dd j]$ is periodic
 (i.e. its smallest period $p$ satisfies $p \leq (j-i+1)/2$),
 and the periodicity cannot be extended to the right nor to the left:
 $w[i-1\dd j]$ and $w[i\dd j+1]$ have larger periods when these words are
 defined (see Figure~\ref{figu-runs}).

\section{Fewer runs than length}

We consider an ordering $<$ on the word alphabet and the corresponding
 lexicographic ordering denoted $<$ as well.
We also consider the lexicographic ordering $\widetilde <$, called the reverse
 ordering, inferred by the inverse alphabet ordering $<^{-1}$.
The main element in the proof of the theorem is to assign to each run
 its greatest suffix according to one of the two orderings.

\begin{figure}[ht]
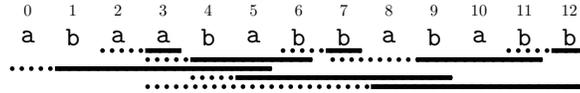

\begin{center}
\normalsize
\MediumPicture\VCDraw{%
\begin{VCPicture}{(13,5)(0,7)}
\ChgStateLineStyle{none}
\ChgStateLabelScale{.6}
\State[0]{(1,6.6)}{i0}
\State[1]{(2,6.6)}{i1}
\State[2]{(3,6.6)}{i2}
\State[3]{(4,6.6)}{i3}
\State[4]{(5,6.6)}{i4}
\State[5]{(6,6.6)}{i5}
\State[6]{(7,6.6)}{i6}
\State[7]{(8,6.6)}{i7}
\State[8]{(9,6.6)}{i8}
\State[9]{(10,6.6)}{i9}
\State[10]{(11,6.6)}{i10}
\State[11]{(12,6.6)}{i11}
\State[12]{(13,6.6)}{i12}
\ChgStateLabelScale{1}
\State[\tt a]{(1,6)}{c0}
\State[\tt b]{(2,6)}{c1}
\State[\tt a]{(3,6)}{c2}
\State[\tt a]{(4,6)}{c3}
\State[\tt b]{(5,6)}{c4}
\State[\tt a]{(6,6)}{c5}
\State[\tt b]{(7,6)}{c6}
\State[\tt b]{(8,6)}{c7}
\State[\tt a]{(9,6)}{c8}
\State[\tt b]{(10,6)}{c9}
\State[\tt a]{(11,6)}{c10}
\State[\tt b]{(12,6)}{c11}
\State[\tt b]{(13,6)}{c12}
\RstEdge\ChgEdgeArrowStyle{-}
\ChgEdgeLineWidth{3}
\ChgEdgeLineStyle{dotted}
\Point{(2.6,5.7)}{r12} \Point{(3.4,5.7)}{r22}
\Point{(3.6,5.7)}{r23}
\Point{(4.4,5.7)}{r33} \Edge{r12}{r33}{}
\Point{(3.6,5.5)}{s23} \Point{(5.4,5.5)}{s44}
\Point{(4.6,5.5)}{s34}
\Point{(7.3,5.5)}{s66} \Edge{s23}{s66}{}
\Point{(2.6,5.3)}{t12} \Point{(5.4,5.3)}{t44}
\Point{(1.6,5.3)}{t01}
\Point{(0.6,5.3)}{tt} \Point{(6.4,5.3)}{t55} \Edge{tt}{t55}{}
\Point{(6.6,5.7)}{r56} \Point{(7.4,5.7)}{r66}
\Point{(7.6,5.7)}{r67}
\Point{(8.4,5.7)}{r77} \Edge{r56}{r77}{}
\Point{(5.6,5.1)}{u45} \Point{(8.4,5.1)}{u77}
\Point{(4.6,5.1)}{u34} \Point{(10.4,5.1)}{u99} \Edge{u34}{u99}{}
\Point{(3.6,4.9)}{v23} \Point{(8.4,4.9)}{v77}
\Point{(8.6,4.9)}{v78}
\Point{(3.6,4.9)}{v23} \Point{(13.4,4.9)}{v1212} \Edge{v23}{v1212}{}
\Point{(8.6,5.5)}{s78} \Point{(10.4,5.5)}{s99}
\Point{(9.6,5.5)}{s89}
\Point{(7.7,5.5)}{s67} \Point{(12.4,5.5)}{s1111} \Edge{s67}{s1111}{}
\Point{(11.6,5.7)}{r1011} \Point{(12.4,5.7)}{r1111}
\Point{(12.6,5.7)}{r1112}
\Point{(13.4,5.7)}{r1212} \Edge{r1011}{r1212}{}
\ChgEdgeLineStyle{solid}
\Edge{r23}{r33}{}
\Edge{r67}{r77}{}
\Edge{r1112}{r1212}{}
\Edge{s34}{s66}{}
\Edge{s89}{s1111}{}
\Edge{t01}{t55}{}
\Edge{u45}{u99}{}
\Edge{v78}{v1212}{}
\end{VCPicture}}
\end{center}
\caption{Plain lines show the $8$ greatest proper suffixes assigned to runs
 of $\texttt{abaababbababb}$ from Figure~\ref{figu-runs} in the proof of the
 theorem.
Note that no two suffixes start at the same position.}
\label{figu-mps}
\end{figure}

\begin{theorem}\label{theo-runs}
The number of runs in a word of length $n$ is less than $n$.
\end{theorem}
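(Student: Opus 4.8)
I plan to produce an injection from the set of runs of $w$ into $\{1,2,\dots,n-1\}$; since the target has $n-1$ elements, this gives the claimed bound at once. To a run $r=[i\dd j]$ with smallest period $p$ (so that $j-i+1\ge 2p$) I attach a \emph{witness} position $\beta(r)$ in two steps. First I choose, among the two orderings $<$ and $\widetilde<$, the one $\prec_r$ for which the letter actually following the run is larger than the one the period predicts: since the run does not extend to the right, $w[j+1]\ne w[j+1-p]$ whenever $j<n-1$, and exactly one ordering satisfies $w[j+1-p]\prec_r w[j+1]$ (when $j=n-1$ I read $w[j+1]$ as an end-marker smaller than every letter, which forces $\prec_r={\widetilde<}$). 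Second, I set $\beta(r)$ to be the starting position of the $\prec_r$-greatest \emph{proper} suffix of the factor $w[i\dd j]$. Being proper, this suffix starts after $i$; being a suffix of $w[i\dd j]$, it starts no later than $j$; hence $\beta(r)\in\{i+1,\dots,j\}\subseteq\{1,\dots,n-1\}$ with no case analysis at all, and the entire content of the proof is the injectivity of $r\mapsto\beta(r)$.

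\medskip
\noindent\textbf{The structure visible at $\beta(r)$.}
The factor $w[i\dd j]$ is a fractional power of the primitive word $u=w[i\dd i+p-1]$, so its $p$ period-aligned windows of length $p$ run exactly through the $p$ conjugates of $u$. A short lexicographic check — any window of length $<p$ occurring as a proper suffix is a proper prefix of a conjugate and so is $\prec_r$-smaller than the $\prec_r$-maximal conjugate, while among full windows only copies of the maximal conjugate can start the greatest proper suffix, the leftmost admissible copy winning over later ones — shows that $w[\beta(r)\dd\beta(r)+p-1]$ equals the $\prec_r$-maximal conjugate of $u$, that is, the $\widetilde{\prec_r}$-Lyndon conjugate $\ell$ of $u$. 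Here the choice of $\prec_r$ becomes crucial: because $w[j+1]$ breaks the period ``upward'' in $\prec_r$, no prefix of $w[\beta(r)\dd n-1]$ of length exceeding $p$ can be $\widetilde{\prec_r}$-Lyndon, so $p$ is precisely the length of the longest $\widetilde{\prec_r}$-Lyndon prefix of $w[\beta(r)\dd n-1]$. (This is exactly where both orderings are needed: for a single ordering the greatest proper suffix may begin at the last position $j$, from which the run cannot be recovered.)

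\medskip
\noindent\textbf{Injectivity --- the crux.}
Suppose $\beta(r)=\beta(r')=m$ for runs $r=[i\dd j]$ and $r'=[i'\dd j']$ with smallest periods $p,p'$ and chosen orderings $\prec_r,\prec_{r'}$. If $\prec_r=\prec_{r'}=\prec$, then by the paragraph above $p$ and $p'$ both equal the length of the longest $\widetilde{\prec}$-Lyndon prefix of $w[m\dd n-1]$, so $p=p'$; prolonging the $p$-periodicity of $w[m\dd m+p-1]$ maximally to the right recovers $j$ (it stops exactly where $r$'s period breaks) and maximally to the left recovers $i$, whence $r=r'$. If $\prec_r\ne\prec_{r'}$, then $w[m\dd m+p-1]$ is $\widetilde{\prec_r}$-Lyndon while $w[m\dd m+p'-1]$ is $\prec_r$-Lyndon; a $\widetilde{\prec_r}$-Lyndon word of length $>1$ opens with its strictly $\prec_r$-largest letter and a $\prec_r$-Lyndon word of length $>1$ opens with its strictly $\prec_r$-smallest letter, and as one of these two words is a prefix of the other this cannot happen unless $p$ or $p'$ equals $1$ --- and a period-$1$ run, whose witness is simply $i+1$, is ruled out by inspection. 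In every case $r=r'$, so $\beta$ is injective and the number of runs is at most $n-1<n$. I expect the genuine difficulty to lie entirely in this last paragraph: turning ``$m$ together with $\prec_r$ determines $r$'' into a rigorous argument requires the recovery of $p$ as the \emph{smallest} period, the correct placement of the left endpoint $i$, a Fine--Wilf periodicity step to compare overlapping runs, and a careful treatment of runs abutting an end of $w$; everything preceding it --- in particular that $\beta$ takes values in $\{1,\dots,n-1\}$ --- is essentially immediate.
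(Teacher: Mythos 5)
Your proposal is correct and follows essentially the same route as the paper: you assign to each run the starting position of its greatest proper suffix under the ordering ($<$ or $\widetilde{<}$) selected by comparing the letter following the run with the one a period earlier, and you prove this assignment injective into $\{1,\dots,n-1\}$, which is exactly the paper's argument. The only organizational difference is that you package the same-ordering case via recovering $p$ as the length of the longest $\widetilde{\prec_r}$-Lyndon prefix at the shared position, whereas the paper runs a three-case analysis on $j$ versus $\bar{j}$ and on the orderings; both rest on the same facts (border-freeness of the greatest conjugate and maximality of the chosen suffix), and your reduction of the mixed-ordering case to period $1$ followed by a local letter comparison is precisely the paper's third case.
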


\begin{proof}
Let $w$ be a word of length $n$.
Let $[i\dd j]$ ($0\leq i < j < n$) be a run of smallest period $p$ in $w$.
If $j+1<n$ and $w[j+1] > w[j-p+1]$ we assign to the run the position $k$
 for which $w[k\dd j]$ is the greatest proper suffix of $w[i\dd j]$.
Else, $k$ is the position of the greatest proper suffix of $w[i\dd j]$
 according to $\widetilde{<}$.

Note that if $k>i$ then $k>0$, and that $w[k\dd j]$ contains a full period of
 the run factor, i.e. $j-k+1 \geq p$.
Also note that $w[k\dd k+p-1]$ is a greatest conjugate of the period root
 $w[i\dd i+p-1]$ according to one of the two orderings.
Therefore, it is border-free, known property of Lyndon words.

We claim that each position $k>0$ on $w$ is the starting position
 of at most one greatest proper suffix of a run factor.
Let us consider two distinct runs $[i\dd j]$ and $[\bar{i}\dd \bar{j}]$
 of respective periods $p$ and $q$, and which are called
 respectively the $p$-run and the $q$-run.
Assume $p \neq q$ since the runs cannot be distinct and have the same period.
For the sake of contradiction, we assume that their greatest suffixes
 share the same starting position $k$.

First case, $j = \bar{j}$, which implies $w[k\dd j]=w[k\dd \bar{j}]$.
Assume for example that $p<q$.
Then, $w[k\dd k+q-1]$ has period $p$ and thus is not border-free, which is
 a contradiction.

Second case, assume without loss of generality that $j < \bar{j}$ and
 that both suffixes are the greatest in their runs
 according to the same ordering, say $<$.
Let $d=w[j+1]$, the letter following the $p$-run.
By definition we have $w[j-p+1] < d$ and then $w[i\dd j-p+1] < w[i\dd j-p]d$.
But since $w[i+p\dd j]d$ is a factor of the $q$-run
 this contradicts the maximality of $w[k\dd \bar{j}-1]$.

Third case, $j \neq \bar{j}$ and the suffixes are greatest according
 to different orderings.
Assume without loss of generality that $p<q$ and the suffix of the $p$-run factor
 is greatest according to $<$.
Since $q>1$ we have both $w[k+q-1] \widetilde{>} w[k]$ and $w[k+q-1] = w[k-1]$,
 then $w[k-1] < w[k]$.
We cannot have $p>1$ because this implies $w[k-1] > w[k]$.
And we cannot have either $p=1$ because this implies $w[k-1] = w[k]$.
Therefore we get again a contradiction.

This ends the proof of the claim and shows that the number of runs is no more than
 the number $n-1$ of potential values for $k$, as stated.
\end{proof}

\section{Lyndon roots}

The proof of Theorem~\ref{theo-runs} by Bannai et al.~\cite{BaIInNaTaTs14}
 relies on the notion of a Lyndon root.
Recall that, for a fixed ordering on the alphabet, a Lyndon word is a primitive word
 that is not larger than any of its conjugates (rotations).
Equivalently, it is smaller than all its proper suffixes.
The root of a run $[i\dd j]$ of period $p$ in $w$ is the factor $w[i\dd i+p-1]$.
Henceforth, the Lyndon root of a run is the Lyndon conjugate of its root.
Therefore, since a run has length at least twice as long as its root,
 the first occurrence of its Lyndon root is followed by its first letter.
This notion of Lyndon root is the basis of the proof of the $0.5n$ upper bound
 on the number of cubic runs given in~\cite{CrochemoreIKRRW:2012}.
Recall that a run is said to be cubic if its length is at least three times
 larger than its period.

Lyndon roots considered in~\cite{BaIInNaTaTs14} are defined according to the two
 orderings $<$ and $\widetilde <$.
However, these Lyndon roots can be defined as smallest or greatest conjugates
 of the run root according to only one ordering.

The proof of Theorem~\ref{theo-runs} is inspired by the proof
 in~\cite{BaIInNaTaTs14} but does not use explicitly the notion of Lyndon roots.
The link between the two proofs is as follows: when the suffix $w[k\dd j]$
 is greatest according to $<$ in the run factor, then its prefix of period length,
 $w[k\dd k+p-1]$, is a Lyndon word according to $\widetilde <$.
As a consequence, the assignment of positions to runs is almost the same whatever
 greatest suffixes or Lyndon roots are considered.

The use of Lyndon roots leaves more flexibility to assign positions
 to runs.
Indeed, a run factor may contain several occurrences of the run Lyndon root.
Furthermore, any two consecutive occurrences of this root do not overlap
 and are adjacent.
The multiplicity of these occurrences can be transposed to greatest
 suffixes by considering their borders.
Doing so, what is essential in the proof of Theorem~\ref{theo-runs} is that
 the suffixes and borders so defined are at least as long as
 the period of the run.
Consequently, consecutive such marked positions can be assigned to the same run.
As a consequence, since every cubic run is associated to at least two positions,
 this yields the following corollaries.

\begin{corollary}
If a word of length $n$ contains $c$ cubic runs, it contains less than $n-c$ runs.
\end{corollary}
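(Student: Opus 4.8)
The plan is to refine the injective assignment built in the proof of Theorem~\ref{theo-runs} into an assignment of a \emph{set} of positions to each run, keeping the sets attached to distinct runs pairwise disjoint while ensuring that every cubic run receives at least two positions. Granting this, a word with $R$ runs, of which $c$ are cubic, uses at least $(R-c)+2c=R+c$ distinct positions; as in Theorem~\ref{theo-runs} each used position starts a proper suffix of a run factor and hence lies in $\{1,\dots,n-1\}$, so $R+c\le n-1$ and therefore $R\le n-1-c<n-c$.

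To build the sets, keep the choice of Theorem~\ref{theo-runs}: to a run $[i\dd j]$ of period $p$ it attaches a position $k$ with $i<k$ and $j-k+1\ge p$ such that $w[k\dd j]$ is the greatest proper suffix of $w[i\dd j]$ for one of the two orderings, and the prefix $\mu=w[k\dd k+p-1]$ is the corresponding extremal conjugate of the run root, hence border-free. Because a primitive factor of length $p$ occurs inside a run of minimal period $p$ only at positions that are congruent modulo $p$ (else it would have a period smaller than $p$), the first occurrence of $\mu$ after $i$ is within distance $p$, so $k-i\le p$. Now assign to $[i\dd j]$ the positions $k,k+p,\dots,k+(m-1)p$ with $m=\lfloor (j-k+1)/p\rfloor$: these are the starting positions of the successive occurrences of $\mu$ inside $w[k\dd j]$, equivalently those of the greatest proper suffix $w[k\dd j]$ together with all of its borders of length at least $p$. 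By periodicity, for each such $x=k+tp$ the word $w[x\dd j]$ still has length at least $p$ and its prefix of length $p$ is still $\mu$. If the run is cubic, then $j-i+1\ge 3p$ together with $k-i\le p$ gives $j-k+1\ge 2p$, hence $m\ge 2$, so a cubic run is assigned at least two positions.

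It remains to show that no position $x$ is assigned to two distinct runs, say to the $p$-run $[i\dd j]$ and to the $q$-run $[\bar i\dd\bar j]$. The only features of the original position $k$ used in the three-case discussion of Theorem~\ref{theo-runs} are precisely the two just isolated — $w[x\dd j]$ (respectively $w[x\dd\bar j]$) is at least as long as $p$ (respectively $q$) and its prefix of that length is the extremal, border-free conjugate of the root — so that discussion can be replayed with $x$ in the role of $k$ and with $w[x\dd j]$, $w[x\dd\bar j]$ in the roles of the greatest suffixes. When $j=\bar j$ the common word $w[x\dd j]=w[x\dd\bar j]$ would carry both the period $\min(p,q)$ and a border-free prefix of length $\max(p,q)$, which is impossible; when $j\ne\bar j$ and the distinguished conjugates are extremal for the same ordering, the letter following the shorter run produces a proper suffix of the longer run's factor larger than $w[x\dd\bar j]$, which contradicts that $w[x\dd x+q-1]$ is the extremal conjugate of the longer run's root (or its border-freeness, in boundary subcases); when they are extremal for opposite orderings, the letters $w[x-1]$ and $w[x]$ receive two incompatible comparisons.

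The main obstacle is this last step: each case of Theorem~\ref{theo-runs} has to be re-examined under the weaker hypothesis that $x$ need not start a greatest suffix but only an interior occurrence of $\mu$. Concretely one uses $i<x$, $\bar i<x$, $x+p-1\le j$ and $x+q-1\le\bar j$ — that is, the fact stressed just before this corollary that the marked suffixes and borders are never shorter than the period of their run — to keep the witnessing factors inside the relevant runs, and one invokes once more the primitivity of the conjugates (equivalently, the ``occurrences are congruent modulo the period'' observation) to discard the degenerate possibility that the extremal conjugate of one root reappears shifted by the other period. Everything else is bookkeeping.
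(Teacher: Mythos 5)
Your proposal is correct and follows essentially the same route as the paper, which justifies this corollary by the paragraph immediately preceding it: transpose the multiple adjacent occurrences of the Lyndon root into borders of the greatest proper suffix that are at least as long as the period, mark the corresponding positions $k, k+p, \dots$, observe that the disjointness argument of Theorem~\ref{theo-runs} only ever uses that the marked suffix has length at least the period and begins with the border-free extremal conjugate, and note that a cubic run receives at least two such positions. You in fact supply more detail (e.g.\ the bound $k-i\le p$ and the case-by-case re-examination) than the paper, which leaves these steps implicit.
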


\begin{corollary}
A word of length $n$ contains less than $0.5n$ cubic runs.
\end{corollary}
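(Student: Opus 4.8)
The plan is to derive this corollary directly from the preceding one, using only the elementary observation that every cubic run is in particular a run. Write $c$ for the number of cubic runs in the word $w$ of length $n$, and let $R$ denote the total number of runs in $w$.

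First I would invoke the previous corollary, which gives $R < n - c$. Then I would note that a cubic run, having length at least three times its period, in particular has length at least twice its period, so it satisfies the defining condition of a run; hence the $c$ cubic runs are among the $R$ runs counted, giving $c \leq R$. Combining the two inequalities yields $c \leq R < n - c$, so $2c < n$, that is, $c < 0.5n$, as claimed.

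The only point that needs a word of care is making sure the chain $c \le R < n-c$ is used in the right direction: we are bounding $c$ from above by $R$ and $R$ from above by $n-c$, and the appearance of $c$ on the right-hand side is exactly what produces the factor $1/2$. There is no genuine obstacle here; the substance of the argument lives entirely in Theorem~\ref{theo-runs} and the assignment-of-positions refinement (each cubic run owns at least two of the $n-1$ candidate positions) that underlies the previous corollary, and this corollary is merely the arithmetic consequence of specialising that refinement to the case where all accounted-for extra positions come from the cubic runs themselves.
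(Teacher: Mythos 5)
Your derivation is correct and matches the paper's intent: the paper presents both corollaries as consequences of the refined position-assignment (each cubic run claims at least two of the $n-1$ available positions), and your chain $c \le R < n-c$ is exactly the arithmetic that extracts the second corollary from the first. Nothing is missing.
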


The last statement is proved in~\cite{CrochemoreIKRRW:2012} employing the notion
 of Critical position, which is discussed in the next section.

\section{Critical positions}

The consideration of the two above orderings appears in the
 simple proof of the Critical Factorisation Theorem
 \cite{CP91jacm} (for another proof see \cite[Chapter 8]{Lothaire:1997}).

Let us recall that the local period at position $|u|$ in $uv$ is the length
 of the shortest non-empty word $z$ for which $z^2$ is a repetition centred
 at position $|u|$.
Equivalently, in simpler words, $z$ is the shortest non-empty word that satisfies
 both conditions: either $z$ is a suffix of $u$ or $u$ is a suffix of $z$,
 and either $z$ is a prefix of $v$ or $v$ is a prefix of $z$.
Note that $vu$ satisfies the conditions but is not necessarily the shortest word
 to do it.
The Critical Factorisation Theorem states that a word $x$ of period $p$ admits a
 factorisation $x=uv$ whose local period at position $|u|$ is $p$.
Such a factorisation $uv$ of $x$ is called a critical factorisation and
 the position $|u|$ on $x$ a critical position.

When considering the starting positions of greatest suffixes defined above according
 to $<$ and to $\widetilde{<}$, the shorter of the two is known to provide
 a critical position following~\cite{CP91jacm}.
Thus, it does not come as a surprise to us that the simple proof of
 Theorem~\ref{theo-runs} relies on alphabet orderings.
Nevertheless, as the initial question does not involve any ordering on the alphabet,
 we could expect a proof using, for example, only the notion of critical positions.
The next lemma may be a step on this way.

\begin{figure}[ht]
\setlength{\unitlength}{1pt}
\fbox{\begin{picture}(321,50)
\put( 40,42){\framebox(120,6){}}\put(160,42){\framebox(120,6){}}
\put( 40,42){\makebox(50,6){$v$}}\put( 90,42){\makebox(70,6){$u$}}
\put(160,42){\makebox(50,6){$v$}}\put(210,42){\makebox(70,6){$u$}}
\put( 30,30){\framebox( 60,6){$y$}}\put( 90,30){\framebox( 60,6){$y$}}
\put(100,22){\framebox( 50,6){$v$}}\put(150,22){\framebox( 60,6){$\,\,\,q$}}
\put( 90, 10){\dashbox{1}(0,38){}}
\put( 90, 10){\framebox( 20,6){$p\,\,\,$}}\put(110, 10){\framebox( 70,6){$u$}}
\put(110,2){\framebox(100,6){$z$}}\put(210,2){\framebox(100,6){$z$}}
\put(210,8){\dashbox{1}(0,40){}}
\thicklines
\put(99, 9){\dashbox( 82,20){}}
\end{picture}}
\caption{%
If $uv$ is a border-free factor of $(vu)^2$, then at least one of its local period words
 $y$ or $z$ have length $|uv|$.
Otherwise, the common part in the dash-box has length equal to the sum of its periods $p$ and $q$ generating a contradiction.
}\label{figu-cpsq}
\end{figure}
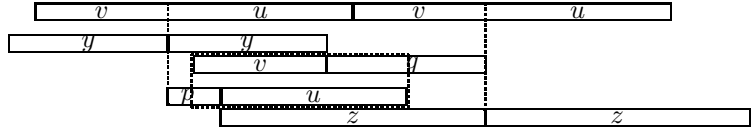

\begin{lemma}\label{lemm-cpsq}
Let $x^2 = (vu)^2$ be a square whose root conjugate $uv$ is border-free.
Then, at least $|v|$ or $|vuv|$ are critical positions on $x^2$.
\end{lemma}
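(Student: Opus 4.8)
The plan is to argue by contradiction, following the picture in Figure~\ref{figu-cpsq}. We may assume $u,v\neq\varepsilon$, the degenerate cases being immediate from border-freeness. Since $uv$ is border-free it is primitive, hence its conjugate $x=vu$ is primitive too, so $x^{2}$ has smallest period $n:=|x|=|uv|$; thus a position of $x^{2}$ is critical exactly when its local period equals $n$. Since the local period at a position never exceeds the period of the word, it suffices to derive a contradiction from the hypothesis that both local periods --- at position $|v|$ and at position $|vuv|$ of $W:=x^{2}=vuvu$ --- are strictly less than $n$. Fix shortest repetition words $y$ and $z$ centred at $|v|$ and at $|vuv|$ respectively, so $|y|,|z|<n$; recall that $W$ has period $n$ and that $uv$ occurs in $W$ as $W[|v|\dd |vuv|-1]$, with positions $|v|$ and $|vuv|$ its left and right ends.

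First I would dispose of the ``short'' cases. If $|y|\le|v|$, the square $y^{2}$ centred at $|v|$ sits entirely inside $W$; its right copy is then the length-$|y|$ prefix of the occurrence of $uv$, while its left copy is a suffix of the leading block $v$, and by the period-$n$ structure of $W$ this left copy equals the length-$|y|$ suffix of the same occurrence of $uv$. Hence $uv$ has a border of length $|y|\ge1$, which is impossible. The mirror computation at $|vuv|$ excludes $|z|\le|u|$. So from now on $|v|<|y|<n$ and $|u|<|z|<n$, which forces each of the two squares to overhang $W$ on exactly the near end.

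Because the left copy of $y^{2}$ overhangs the left of $W$, the block $v$ is a suffix of the length-$|y|$ prefix of $uv$; together with the trailing occurrence of $v$ this yields two occurrences of $v$ in $uv$ at distance $d_{1}:=n-|y|\in\{1,\dots,|u|-1\}$, so $uv$ has period $d_{1}$ on its suffix of length $d_{1}+|v|$. Symmetrically, $u$ is a prefix of the length-$|z|$ suffix of $uv$, so $uv$ has period $d_{2}:=n-|z|\in\{1,\dots,|v|-1\}$ on its prefix of length $d_{2}+|u|$. A short index count shows that these two intervals together cover the whole of $uv$ and that their intersection --- the dash-boxed common part of Figure~\ref{figu-cpsq} --- has length exactly $d_{1}+d_{2}$. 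By the Fine and Wilf theorem the common part has period $g:=\gcd(d_{1},d_{2})$, and since $d_{1},d_{2}$ are multiples of $g$ a routine propagation carries period $g$ across all of $uv$. As $1\le g\le d_{1}<n=|uv|$, the word $uv$ would then have a nonempty border, contradicting the hypothesis. Hence one of the two local periods equals $n$, i.e.\ $|v|$ or $|vuv|$ is a critical position of $x^{2}$.

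I expect the main obstacle to be the interval bookkeeping of the last two steps: keeping straight on which side each centred square overhangs, translating faithfully between positions of $W$ and positions of the factor $uv$, and above all the final propagation of the period $\gcd(d_{1},d_{2})$ of the common part to the entire word $uv$ --- a two-sided Fine--Wilf type argument. The conceptual heart, namely that border-freeness of $uv$ rules out the period $\gcd(d_{1},d_{2})$ jointly forced by the two local periods, is precisely what Figure~\ref{figu-cpsq} depicts.
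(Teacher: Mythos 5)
Your proof is correct and takes essentially the same route as the paper's: after showing that border-freeness forces $v$ to be a proper suffix of $y$ and $u$ a proper prefix of $z$, you extract the two induced periods $d_1=n-|y|$ and $d_2=n-|z|$ (the paper's $|q|$ and $|p|$), observe that the corresponding periodic regions overlap in a factor of length $d_1+d_2$, and apply the Periodicity Lemma to produce a forbidden border of $uv$. Your write-up is in fact more careful than the paper's at the final step, where the propagation of the period $\gcd(d_1,d_2)$ to all of $uv$ is spelled out rather than asserted.
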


\begin{proof}
Let $y$ be the local period word at position $|v|$ on $x^2$.
Since $uv$ is border-free, $v$ is a proper suffix of $y$.
Similarly, for the local period word $z$ at position $|vuv|$, the border-freeness
 of $uv$ implies that $u$ is a proper prefix of $z$.
The situation is displayed in Figure~\ref{figu-cpsq}.

For the sake of contradiction we assume the conclusion does not hold,
 i.e., both $y$ and $z$ are shorter than $uv$ (note that they cannot be longer than $uv$).

Let $|p|$ be the induced period of $pu$ and $|q|$ the induced period of $vq$.
The overlap between the two words $p$ and $q$ admits period lengths $|p|$
 and $|q|$ and has length $|pu|-(|uv|-|vq|)=|p|+|q|$.
Thus, by the Periodicity Lemma, $p$ and $q$ are powers of the same word $r$.
But then $r$ is a nonempty prefix of $u$ and a suffix of $v$ contradicting
 the border-freeness of $uv$.
\end{proof}

\paragraph{Example.}
Consider the square $\texttt{baba}$ of period $2$.
The occurrence of its border-free factor $\texttt{ab}$ induces the two critical
 positions $1$ and $3$.
On the contrary, the first occurrence of its border-free factor $\texttt{ba}$
 induces only one critical position, namely $2$, while the local period
 at $0$ has length $1<2$.

In the square $\texttt{abaaba}$ of period $3$, the occurrence of
 the border-free factor $\texttt{aab}$ produces the critical position $2$.
However, its position $5$ is not critical since the local period $2$
 is smaller than the whole period of the square.

\section{Lyndon roots density}

In this section we consider a generalisation of the problem of counting the
 maximal number of runs in a word.
In particular, we are interested in the following problem concerning
 first occurrences of Lyndon roots within a run factor.
Let us call the interval corresponding to the first such occurrence the $\LR$
 associated with the run.
Then, we are dealing with the following conjecture:

\begin{conjecture}[\cite{C14-cpm}]\label{pr:rep}
For any two positions $i$ and $j$ on a word $x$, $0\leq i \leq j < |x|$,
 the maximal number of run $\LR$s included in the interval $[i\dd j]$
 is not more than the interval length $j-i+1$.
\end{conjecture}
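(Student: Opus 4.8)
The plan is to mimic the assignment strategy of Theorem~\ref{theo-runs}, but to localise it to the window $[i\dd j]$ and to assign, to each run whose $\LR$ lies inside $[i\dd j]$, a distinct position of that window. Recall from Section~3 that the $\LR$ of a run $[a\dd b]$ of period $p$ is the interval $[c\dd c+p-1]$ of the \emph{first} occurrence of the run's Lyndon root inside $w[a\dd b]$; since the run has length at least $2p$, this first occurrence is immediately followed by its own first letter, and — being a (smallest or greatest) conjugate of the run root — the Lyndon root $w[c\dd c+p-1]$ is border-free. The key structural fact I would isolate first is that, because the Lyndon root is border-free and occupies $[c\dd c+p-1]\subseteq[i\dd j]$, the factor $w[c\dd b']$, where $b' = \min(b,j)$, still carries the run period: more precisely the portion of the run that survives inside $[i\dd j]$ and starts at $c$ has length at least $p$ (this is exactly the "$j-k+1\ge p$" observation from the proof of Theorem~\ref{theo-runs}, which I would re-derive in the windowed setting). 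So each such run contributes an occurrence of a border-free word of its own period length, all sitting inside $[i\dd j]$, and all starting at a position $\ge i$.

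Next I would set up the injection. To the run $[a\dd b]$ with $\LR$ at $[c\dd c+p-1]$ I assign the position $c$ itself — the left endpoint of the $\LR$. The entire content of the argument is then the claim: \emph{no position $k$ with $i\le k\le j$ is the left endpoint of the $\LR$ of two distinct runs whose $\LR$s both lie in $[i\dd j]$.} Suppose two runs, the $p$-run and the $q$-run with $p\ne q$, had $\LR$s starting at the common position $k$. Then $w[k\dd k+p-1]$ and $w[k\dd k+q-1]$ are both border-free (Lyndon) words. Assuming $p<q$, the word $w[k\dd k+q-1]$ has $w[k\dd k+p-1]$ as a prefix and — because the $p$-run extends at least $p$ positions past $k$ and $q\le$ that extent would force periodicity — one shows $w[k\dd k+q-1]$ has period $p<q$, hence has a border, contradicting border-freeness. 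The delicate point is bounding how far the $p$-run reaches to the right of $k$: since the $\LR$ is the \emph{first} occurrence of the Lyndon root and is followed by its first letter, the run factor continues with period $p$ for at least another full period beyond $[k\dd k+p-1]$, giving $w[k\dd k+2p-1]$ periodic with period $p$; combining this with $q<2p$ or, when $q\ge 2p$, iterating the first-occurrence argument, I would push the periodicity as far as $k+q-1$. This case analysis is essentially the "first case" of the proof of Theorem~\ref{theo-runs} transported to $\LR$s, and I expect it to go through cleanly for $p,q$ of comparable size; the situation $q$ much larger than $p$ is where one must argue that a long $q$-run whose Lyndon root starts at $k$ cannot coexist with a short $p$-run also rooted at $k$, and that is the step I would write most carefully.

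Having established the claim, the injection from $\{\text{runs with }\LR\subseteq[i\dd j]\}$ into $\{k : i\le k\le j\}$ is immediate, and its non-surjectivity — the missing "$-1$", i.e. that the bound is $j-i+1$ and not $j-i$ — would come, as in Theorem~\ref{theo-runs}, from observing that the last position of the window, or the first, cannot serve as the left endpoint of a $\LR$: a $\LR$ has length $p\ge 2$ and is followed inside the run by at least one more letter, so its left endpoint $c$ satisfies $c+p\le b\le$ something strictly inside, forcing $c\le j-1$; symmetrically one rules out $k=i$ under the right degenerate condition. One of these two exclusions suffices to get the strict inequality. I would also remark that, unlike in Theorem~\ref{theo-runs}, here there is no choice between two orderings — the Lyndon root is a single canonical object — which is what makes the bound $j-i+1$ rather than $2(j-i+1)$ plausible but also what makes the conjecture genuinely harder than the special case $i=0$, $j=|x|-1$.

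\textbf{Main obstacle.} The crux is the windowed analogue of border-freeness propagation: ensuring that when a run's $\LR$ sits inside $[i\dd j]$ but the run itself spills outside the window on the right, the factor $w[c\dd\min(b,j)]$ inside the window still "sees" a full period of the run, so that the border-free-word argument in the injectivity claim has enough room to produce its contradiction. If the window cuts the run too short on the right — shorter than $p$ positions past $c$ — the argument breaks; I believe this cannot happen precisely because $[c\dd c+p-1]\subseteq[i\dd j]$ already guarantees $p$ positions, and the $\LR$ being a \emph{first} occurrence forces at least one more letter of the run after it, but pinning down that this extra room lies inside $[i\dd j]$ and not beyond $j$ is the subtle point, and it is conceivable that the conjecture is in fact open here for exactly this reason.
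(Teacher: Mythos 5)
The statement you are trying to prove is stated in the paper as a \emph{conjecture} (attributed to an earlier paper) and is left open there: the authors only establish the upper bound $3\ell/2$ for an interval of length $\ell$, and they do so by a quite different route, transferring the counting from $\LR$s to $\RR$s (the ``Oroots'' of Bannai et al., which \emph{do} have the property that at most one starts at each position) and then analysing how an $\RR$ can escape the window of its $\LR$. So any complete, elementary proof along your lines would be a genuine advance --- and unfortunately yours does not close.

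The concrete gap is your central injectivity claim, namely that no position $k$ can be the left endpoint of the $\LR$s of two distinct runs. This is false, and the paper's own extremal example refutes it. Take $x=(\texttt{ab})^{2}\texttt{a}(\texttt{ab})^{2}\texttt{b}(\texttt{ab})^{2}\texttt{b}=\texttt{ababaababbababb}$ and the interval $[4\dd 9]$ (Figure~\ref{figu-LW}): the six $\LR$s there are $[4\dd 4]$, $[4\dd 6]$, $[5\dd 6]$, $[5\dd 9]$, $[7\dd 9]$, $[8\dd 8]$, so positions $4$ and $5$ each carry \emph{two} $\LR$ left endpoints. Your proposed contradiction --- that if the $p$-root and the $q$-root ($p<q$) both start at $k$ then $w[k\dd k+q-1]$ inherits period $p$ and hence a border --- relies on the $p$-run remaining periodic up to $k+q-1$. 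It does not: for $k=5$ the period-$2$ run \texttt{abab} ends at position $8$ while the period-$5$ Lyndon root \texttt{ababb} runs to position $9$, and indeed \texttt{ababb} has no period $2$ and no border. A Lyndon word can be a proper prefix of a longer Lyndon word (\texttt{ab} of \texttt{ababb}, \texttt{a} of \texttt{aab}) without any border arising, which is exactly the case $q\geq 2p$ you flagged as ``the step I would write most carefully''; it is not merely delicate, it fails. (Two smaller points: the period, hence the $\LR$ length, can be $1$, so ``$p\ge 2$'' is wrong; and your attempt to extract a strict ``$-1$'' would contradict Proposition~\ref{prop:lower}, since the bound $j-i+1$ is attained.) This failure of injectivity at the level of $\LR$ starting positions is precisely why the paper introduces the $\RR$s and settles for $1.5\ell$.
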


\begin{figure}[ht]
\begin{center}
\normalsize
\MediumPicture\VCDraw{%
\begin{VCPicture}{(14,2)(0,3)}
\ChgStateLineStyle{none}
\ChgStateLabelScale{.6}
\State[0]{(0,2.6)}{i0}
\State[1]{(1,2.6)}{i1}
\State[2]{(2,2.6)}{i2}
\State[3]{(3,2.6)}{i3}
\State[4]{(4,2.6)}{i4}
\State[5]{(5,2.6)}{i5}
\State[6]{(6,2.6)}{i6}
\State[7]{(7,2.6)}{i7}
\State[8]{(8,2.6)}{i8}
\State[9]{(9,2.6)}{i9}
\State[10]{(10,2.6)}{i10}
\State[11]{(11,2.6)}{i11}
\State[12]{(12,2.6)}{i12}
\State[13]{(13,2.6)}{i13}
\State[14]{(14,2.6)}{i14}
\ChgStateLabelScale{1}
\State[\tt a]{(0,2)}{c0}
\State[\tt b]{(1,2)}{c1}
\State[\tt a]{(2,2)}{c2}
\State[\tt b]{(3,2)}{c3}
\State[\tt a]{(4,2)}{c4}
\State[\tt a]{(5,2)}{c5}
\State[\tt b]{(6,2)}{c6}
\State[\tt a]{(7,2)}{c7}
\State[\tt b]{(8,2)}{c8}
\State[\tt b]{(9,2)}{c9}
\State[\tt a]{(10,2)}{c10}
\State[\tt b]{(11,2)}{c11}
\State[\tt a]{(12,2)}{c12}
\State[\tt b]{(13,2)}{c13}
\State[\tt b]{(14,2)}{c14}
\RstEdge\ChgEdgeArrowStyle{-}
\ChgEdgeLineWidth{3}
\ChgEdgeLineStyle{dotted}
\Point{(3.6,1.7)}{r34}
\Point{(4.4,1.7)}{r43}
\Point{(4.6,1.7)}{r45}
\Point{(6.4,1.7)}{r66}
\Point{(3.6,1.5)}{s34}
\Point{(6.4,1.5)}{s66}
\Point{(6.6,1.5)}{s67}
\Point{(9.4,1.5)}{s99}
\Point{(7.6,1.7)}{r78}
\Point{(8.4,1.7)}{r88}
\Point{(4.6,1.3)}{t45}
\Point{(9.4,1.3)}{t99}
\ChgEdgeLineStyle{solid}
\Edge{r34}{r43}{}
\Edge{s34}{s66}{}
\Edge{r45}{r66}{}
\Edge{r78}{r88}{}
\Edge{s67}{s99}{}
\Edge{t45}{t99}{}
\end{VCPicture}}
\end{center}
\caption{Lines show the $6$ run $\LR$s inside the interval $[4\dd 9]$ corresponding to
 the factor $\texttt{aababb}$.}
\label{figu-LW}
\end{figure}

Let us consider the word
 $x=(\texttt{ab})^{k}\texttt{a}(\texttt{ab})^{k}\texttt{b}(\texttt{ab})^{k}\texttt{b}$
 and the interval of positions $[2k\dd 4k+1]$ corresponding to the factor
 $\texttt{a}(\texttt{ab})^{k}\texttt{b}$.
The number of $\LR$s corresponding to this interval is exactly the length
 $2(k+1)$ of the interval.
Figure~\ref{figu-LW} shows the situation when $k=2$.
This example gives a lower bound on the maximal number of $\LR$s contained
 in an interval of positions.

\begin{proposition}\label{prop:lower}
The number of $\LR$s contained in an interval of positions on a word, can be
 as large as the length of the interval.
\end{proposition}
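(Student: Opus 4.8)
The plan is to carry out in detail the example displayed just above the statement. Fix $k\ge 2$, set $x=(\texttt{ab})^{k}\texttt{a}(\texttt{ab})^{k}\texttt{b}(\texttt{ab})^{k}\texttt{b}$, and write $x=P_1\texttt{a}P_2\texttt{b}P_3\texttt{b}$, where $P_1,P_2,P_3$ are the three occurrences of $(\texttt{ab})^{k}$, at positions $[0\dd 2k-1]$, $[2k+1\dd 4k]$ and $[4k+2\dd 6k+1]$. Take $I=[2k\dd 4k+1]$, of length $2(k+1)$, spanning the lone $\texttt{a}$ at position $2k$, the block $P_2$, and the lone $\texttt{b}$ at position $4k+1$. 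I will name $2(k+1)$ runs of $x$, determine their $\LR$s, and check that these $\LR$s are $2(k+1)$ pairwise distinct intervals all included in $I$; since $|I|=2(k+1)$, this yields the statement.

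The $2(k+1)$ runs are of four kinds. For every $t$ with $1\le t\le k$: the interval $A_t=[2k-2t+2\dd 2k+2t-1]$, whose factor is the square of $(\texttt{ab})^{t-1}\texttt{a}$ — indeed the length-$(2t-1)$ suffix of $P_1\texttt{a}$ and the length-$(2t-1)$ prefix of $P_2$ are both $(\texttt{ab})^{t-1}\texttt{a}$; and symmetrically $B_t=[4k-2t+2\dd 4k+2t-1]$, whose factor is the square of $(\texttt{ba})^{t-1}\texttt{b}$, since the length-$(2t-1)$ suffix of $P_2$ and the length-$(2t-1)$ prefix of $\texttt{b}P_3$ are both $(\texttt{ba})^{t-1}\texttt{b}$. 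Then $P_2$ itself, a run of period $2$ (this is where $k\ge 2$ enters). And $C=[2k+1\dd 6k+2]$, the square of $(\texttt{ab})^{k}\texttt{b}$, of period $2k+1$.

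For each named interval three things must be checked. (i) It is a run: the stated period is the smallest one — $P_2=(\texttt{ab})^{k}$ is not constant, and the other factors are squares of the primitive words $(\texttt{ab})^{t-1}\texttt{a}$, $(\texttt{ba})^{t-1}\texttt{b}$, $(\texttt{ab})^{k}\texttt{b}$ (primitive since their length is odd while their numbers of $\texttt{a}$'s and $\texttt{b}$'s differ by $1$) — and the periodicity extends neither way, which is visible from the single neighbouring letter (for $A_t$, the neighbour to the left is a $\texttt{b}$ at an odd position of $P_1$ where period $2t-1$ asks for an $\texttt{a}$, and similarly to the right inside $P_2$). (ii) Its $\LR$ lies in $I$: the Lyndon conjugate of $(\texttt{ab})^{t-1}\texttt{a}$ is $\texttt{a}(\texttt{ab})^{t-1}$, occurring in $((\texttt{ab})^{t-1}\texttt{a})^{2}$ exactly once — straddling that square's unique factor $\texttt{aa}$ — which places the $\LR$ of $A_t$ at $[2k\dd 2k+2t-2]\subseteq I$; dually the Lyndon conjugate of $(\texttt{ba})^{t-1}\texttt{b}$ is $(\texttt{ab})^{t-1}\texttt{b}$, giving the $\LR$ of $B_t$ as $[4k\dd 4k]$ for $t=1$ and $[4k-2t+3\dd 4k+1]\subseteq I$ for $t\ge 2$; the $\LR$ of $P_2$ is $[2k+1\dd 2k+2]\subseteq I$; and $(\texttt{ab})^{k}\texttt{b}$ is itself a Lyndon word, so the $\LR$ of $C$ is $[2k+1\dd 4k+1]$, the left half of $I$. (iii) The $2(k+1)$ intervals thus obtained are distinct: the $A_t$ all start at $2k$ with strictly increasing right ends, the $B_t$ with $t\ge 2$ all end at $4k+1$ with strictly decreasing left ends, and $[4k\dd 4k]$, $[2k+1\dd 2k+2]$, $[2k+1\dd 4k+1]$ are pairwise different and different from everything else.

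I expect the delicate points to be only the extendability checks in (i) and the exact position of the Lyndon-conjugate occurrence inside each square in (ii): the two lone letters sit asymmetrically in their $(\texttt{ab})$-surroundings, and the two $t=1$ cases (roots $\texttt{a}$ and $\texttt{b}$, of length one) are small exceptions to the general position formulas, so one must write $x$ out positionally and proceed carefully — nothing conceptually hard. If one wants the exact value $2(k+1)$ rather than just the lower bound, one checks in addition that no further run of $x$ has its $\LR$ inside $I$; this is not needed for ``as large as'', and it is in any case forced by the bound of Conjecture~\ref{pr:rep}, should that conjecture hold.
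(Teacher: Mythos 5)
Your proposal is correct and is exactly the paper's approach: the paper proves Proposition~\ref{prop:lower} by exhibiting the word $(\texttt{ab})^{k}\texttt{a}(\texttt{ab})^{k}\texttt{b}(\texttt{ab})^{k}\texttt{b}$ with the interval $[2k\dd 4k+1]$ and asserting the count $2(k+1)$ (illustrated in Figure~\ref{figu-LW} for $k=2$), and your $2(k+1)$ runs $A_t$, $B_t$, $P_2$, $C$ with their $\LR$s match that example (your list for $k=2$ reproduces the six intervals of the figure). The extra care you take with the $t=1$ cases, the $k\ge 2$ requirement for $P_2$, and the distinctness check only makes explicit what the paper leaves to the reader.
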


In addition to the conjecture, we believe that factors associated with
 intervals of length at least $4$ containing the maximal number of $\LR$s are
 of the form $a(ab)^+b$ for two different letters $a$ and $b$.
It can be checked that the maximal number of $\LR$s is respectively $1$ and $3$
 for intervals of lengths $1$ and $3$ with factors $a$ and $aab$,
 but is only $1$ for intervals of length $2$.
All these factors are Lyndon words for the ordering $a<b$.
This is due to the fact that such factors contain overlapping Lyndon roots
 making the whole factor a Lyndon word itself.

\begin{remark}\label{rem:interv}
In order to obtain an upper bound on the number of $\LR$s inside an interval
of positions, it is enough to restrict ourselves to counting the maximal
number of $\LR$s within an interval corresponding to some Lyndon word.
\end{remark}

Indeed, each $\LR$ corresponds to a Lyndon word. Since we want an interval
that contains the maximal such number, all the positions of this interval are
covered by some Lyndon word. However, since the overlap between every two
Lyndon words produces a Lyndon word, and since every word can be expressed as
a concatenation of Lyndon words, our claim follows.

We show that the number of $\LR$s inside an interval corresponding to a
Lyndon word is bounded by $1.5$ times the length of the interval. For this we
make use of the result from~\cite{BaIInNaTaTs14} stating that each position
of a word is the starting position of at most one specific root associated
with the run. The root is chosen according to some order defined by the
letter following the run. We denote such a root relative to the order as the
$\RR$ of the run. Formally:

\begin{definition}
Let $r$ be a run of period $p_r$ of the word $w$ and let $r_\rL$ be the $\LR$
associated with $r$. If $r$ ends at the last position of $w$, or if the
letter at the position following $r$ is smaller than the letter $p_r$
positions before it, then the $\RR$ is the interval corresponding to the
first occurrence of a Lyndon root that is not a prefix of $r$. Otherwise, the
$\RR$ is the interval corresponding to the length $p_r$ prefix of the
greatest proper suffix of the run factor $r$.
\end{definition}

Observe that since a run is at least as long as twice its minimal period,
this ensures the existence of both its $\LR$ as well as its $\RR$. To see
that the $\RR$ is never a prefix of the run factor it is associated with,
observe from its above definition that in the second case this is actually
the interval corresponding to the length $p_r$ prefix of the maximal proper
suffix of the run, which is different from the run itself (being proper).

Henceforth we fix an interval $[i\dd j]$ with its corresponding Lyndon word
$w$ of length $\ell$.
Furthermore, we denote by $r_\rL = [i_\rL\dd j_\rL]$ the $\LR$ of the run $r
= [i_r\dd j_r]$ and by $r_\rO= [i_\rO \dd j_\rO]$ its $\RR$. For $r$, we
denote by $p_\rr$ the (smallest) period of the run. Please note that
$|r_\rL|=|r_\rO|=p_r$, while both must start and end within the run $r$.

We make the following remarks based on the already known properties of $\LR$s
and $\RR$s.

\begin{remark}\label{rem:roots}
The $\LR$  and the $\RR$ associated with a run $r$ start within the first
$p_r$ and $p_r+1$, respectively, positions of the run, and both have length
$p_r$.
\end{remark}


As a direct consequence of the definition of the $\RR$ we have the following:
\begin{remark}\label{lem:i_L-before}
If the $\RR$ of a run $r$ is a Lyndon word, then the $\RR$ and the
corresponding $\LR$ represent the same factor and, either $i_\rO=i_\rL$, or
the run $r$ starts at position $i_\rL$ and $i_\rL+p_r  = i_\rO$.
\end{remark}

In conclusion we have the following:

\begin{remark}
To bound the number of $\LR$s inside the interval $[i\dd j]$ corresponding to
the word $w$, it is enough to consider all runs starting within the interval
corresponding to a factor $w_{\rm p}w$, where $|w_{\rm p}| < |w|$.
\end{remark}

For the rest of this work let us fix the factor preceding a Lyndon word $w$
as $w_{\rm p}$, while the one following it by $w_{\rm s}$, such that the
interval corresponding to $w_{\rm p}ww_{\rm s}$ is the shortest interval that
contains all runs with their $\LR$s in $[i\dd j]$.

Now we start looking at the relative positions of the $\RR$ and $\LR$
corresponding to the same run.

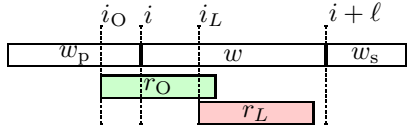
\begin{figure}[ht]
\setlength{\unitlength}{1pt}
\setlength\fboxsep{0pt}
\center
\begin{picture}(150,40)
\put(72,0){\colorbox{red!20}{\framebox( 43,8){$r_L$}}}
\put(0,22){\framebox(50,8){$w_{\rm p}$}}\put(50,22){\framebox(70,8){$w$}}\put(120,22){\framebox(30,8){$w_{\rm s}$}}
\put(35,10){\colorbox{green!20}{\framebox( 43,8){$r_\rO$}}}
\put(35,0){\dashbox{1}(0,37){}}
\put(50,0){\dashbox{1}(0,37){}}
\put(72,0){\dashbox{1}(0,37){}}
\put(120,0){\dashbox{1}(0,37){}}
\put(35,37){\makebox(0,0)[lb]{$i_\rO$}}
\put(72,37){\makebox(0,0)[lb]{$i_{L}$}}
\put(51,38){\makebox(0,0)[lb]{$i$}}
\put(121,38){\makebox(0,0)[lb]{$i+\ell$}}
\end{picture}
\caption{$r_\rO$ starts before the Lyndon word $w$
}\label{fig:BR_before}
\end{figure}


\begin{lemma}\label{lem:BR>LR}
Fix an $\LR$ occurring within an interval $w=[i\dd i+\ell-1]$ of a word. If the
$\RR$ corresponding to the same run starts outside $w$, then either:
 \begin{enumerate}
 \item\label{lem:BR_after}
 the $\LR$ ends at position $i+\ell-1$, and the $\RR$ starts at position
 $i+\ell$,~or 
 \item\label{lem:BR_before} the $\LR$ starts at position $i$, and the $\RR$
 starts before position $i$. 
 \end{enumerate}
\end{lemma}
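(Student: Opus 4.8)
The plan is to use the two defining cases of the $\RR$ together with Remark~\ref{rem:roots}, which pins both $r_\rL$ and $r_\rO$ inside the run $r$ and forces each to start within the first $p_r$ (resp.\ $p_r+1$) positions of $r$ and to have length exactly $p_r$. First I would split according to whether the $\RR$ comes from the first clause of its definition (the first occurrence of a Lyndon root that is \emph{not} a prefix of $r$) or from the second clause (the length-$p_r$ prefix of the greatest proper suffix of $r$).

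In the second-clause case, the $\RR$ is the prefix of the maximal proper suffix of the run factor, hence $i_\rO = j_r - p_r + 1$; in particular $r_\rO$ ends at $j_r$, the last position of the run. Since $r_\rL$ lies inside $r$ and $r$ has period $p_r$, the factor at $[i_\rO\dd j_r]$ equals the $\LR$ shifted into phase, and $j_\rL \le j_r - p_r$ whenever $i_\rO > i_\rL$; combined with the hypothesis that $i_\rO \notin [i\dd i+\ell-1]$ and $i_\rO \le j_r$, the only way for $r_\rO$ to start outside $w$ while $r_\rL$ (which starts at $i_r \le i_\rO$) stays related to $w$ is the configuration of item~\ref{lem:BR_after}: the $\LR$ abuts the right end of $w$ and the $\RR$ starts at $i+\ell$. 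Here I would quote the non-overlapping-and-adjacent property of consecutive Lyndon-root occurrences (stated in the Lyndon-roots section) to get that $r_\rO$ starts exactly one period after $r_\rL$, i.e.\ at $j_\rL + 1 = i + \ell$.

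In the first-clause case, the $\RR$ is the first occurrence of a Lyndon root that is not a prefix of $r$; since the Lyndon root occurrences in $r$ are consecutive, non-overlapping and adjacent, and since $r_\rL$ is the first of them, this $\RR$ is the \emph{second} occurrence, so $i_\rO = i_\rL + p_r$ and $r_\rO$ lies to the right of $r_\rL$. If instead the run starts strictly after $i_\rL$ — impossible, since $r_\rL$'s first occurrence inside $r$ forces $i_r \le i_\rL$ — so $i_r \le i_\rL < i_\rO$. Now the hypothesis $i_\rO \notin [i\dd i+\ell-1]$ forces either $i_\rO \ge i+\ell$ or $i_\rO < i$; the former, coupled with $i_\rL = i_\rO - p_r$ and $|r_\rL| = p_r$ so that $j_\rL = i_\rO - 1 \ge i+\ell-1$, together with $r_\rL$ occurring inside $w$, gives exactly item~\ref{lem:BR_after} again; the latter gives $i_\rO < i \le i_\rL$, which combined with $i_\rO = i_\rL + p_r > i_\rL$ is contradictory, \emph{unless} instead we are in the symmetric situation where the $\LR$ starts at the left boundary. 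I would therefore re-examine this branch: the genuinely new possibility is that the run extends to the left of $w$, $r_\rL$ starts at $i$, and the relevant not-a-prefix Lyndon root occurrence lies to the left — but by adjacency the occurrence immediately to the left of $r_\rL$'s position at $i$ starts at $i - p_r < i$, which is item~\ref{lem:BR_before}.

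The main obstacle I anticipate is bookkeeping the interaction between ``which occurrence of the Lyndon root the $\RR$ picks'' and ``where the run $r$ starts relative to $i$ and $i+\ell$'': the $\RR$ can in principle sit one period to the left \emph{or} one period to the right of the $\LR$ (depending on the defining clause and on whether $r_\rL$ is the last root occurrence in $r$), and I must rule out all configurations in which $r_\rO$ escapes $w$ on one side while $r_\rL$ sits strictly interior to $w$, leaving only the two boundary configurations. The cleanest way to organize this is to observe that $i_\rO - i_\rL \in \{-p_r, 0, p_r\}$ (from Remarks~\ref{rem:roots} and~\ref{lem:i_L-before} and the adjacency of Lyndon-root occurrences), then note $|r_\rL| = p_r = \ell_{r_\rO}$ forces $r_\rO \subseteq w$ whenever $r_\rL \subseteq (i\dd i+\ell-1)$ strictly, so the only escapes are when $r_\rL$ touches a boundary of $w$, yielding precisely the two listed cases.
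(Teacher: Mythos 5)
Your proposal has genuine gaps, and they trace back to two misreadings. First, in the ``second clause'' case you set $i_\rO = j_r - p_r + 1$, i.e.\ you place the $\RR$ at the right end of the run. But the definition says the $\RR$ is the length-$p_r$ \emph{prefix} of the greatest proper suffix of the run factor, and that greatest proper suffix starts within the first period of the run (Remark~\ref{rem:roots}: $i_\rO \le i_r + p_r$), so the $\RR$ sits near the \emph{left} end of the run; the whole right-end analysis built on this collapses. Second, your organizing claim that $i_\rO - i_\rL \in \{-p_r, 0, p_r\}$ is false in general: Remark~\ref{lem:i_L-before} gives $i_\rO \in \{i_\rL, i_\rL + p_r\}$ only when the $\RR$ \emph{is} a Lyndon word, whereas in the greatest-suffix clause the $\RR$ is typically a different conjugate of the root and is out of phase with the Lyndon-root occurrences. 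Concretely, for a run with factor $\texttt{abab}$ and $\texttt{a}<\texttt{b}$, the $\LR$ is $\texttt{ab}$ at the run start while the greatest proper suffix begins with $\texttt{b}$ one position later, so $i_\rO - i_\rL = 1 \neq 0, \pm p_r$. Consequently the adjacency of Lyndon-root occurrences, which you lean on throughout, simply does not govern where the $\RR$ starts.

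The deeper missing idea is that your argument never uses the reduction of Remark~\ref{rem:interv} to the case where $w$ itself is a Lyndon word, and that hypothesis is what the paper's proof actually runs on. For the branch $i_\rO < i$ one must show $i_\rL = i$; the paper does this by observing that if $i < i_\rL$ then the run starts to the left of $i$ and the Lyndon root occurs as an internal factor of the Lyndon word $w$ extending into a proper suffix, which contradicts either the minimality of $w$ among its suffixes or the fact that $r_\rL$ is the \emph{first} occurrence of the Lyndon root in the run. For an arbitrary interval $w$ (no Lyndon assumption) one can have $i_\rO < i < i_\rL$ with $r_\rL$ strictly interior to $w$, so no amount of bookkeeping on $i_\rO - i_\rL$ alone can close this case; you need to bring the Lyndon property of $w$ into play. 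The first branch ($i_\rO \ge i+\ell$), by contrast, is handled in the paper exactly as you suggest, via Remark~\ref{lem:i_L-before}.
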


\begin{proof}
Following Remark~\ref{rem:interv}, without loss of generality assume that
$w_{\rm p}$ starts at position 0 and has length $i$, while $w=[i\dd
i+\ell-1]$ is a Lyndon word.
As stated in the hypothesis, $i\leq i_\rL$ and $|r_\rL| \leq \ell$.\par

First let us assume that $i_\rO \geq i+\ell$, hence the $\RR$ starts after
the end of the Lyndon word $w$. The result follows immediately from
Remark~\ref{lem:i_L-before}.\par

For the second statement, consider Figure~\ref{fig:BR_before} where
$i_\rO<i\leq i_\rL$. Assume towards a contradiction that $i < i_\rL$. Since
$i_\rO<i$ and $r_\rL$ corresponds to an interval on $w$, it must be that the
corresponding run starts before or on position $i_\rO$ and it ends after or
on position $i+\ell-1$. However, since $w$ is a Lyndon word, it must be that
for any word $x$ such that $y x$ is a suffix of $w$, where $y$ is the factor
corresponding to $r_\rL$, we have $w < y x$. But in this case, unless $y$ is
a prefix of $w$, we get a contradiction with the fact that $r_L$ is a Lyndon
root. In the former case, however, we get a contradiction with the definition
of $r_\rL$ since $i < i_\rL$ ($r_\rL$ is the interval corresponding to the
first occurrence of a Lyndon root of the run), and the conclusion follows in
this case as well.
\end{proof}

As a consequence of the above lemma, the number of $\LR$s is bounded by
$2\ell$. This is because the $\RR$s all start inside an interval of length
$|w_{\rm p}|+|w|+1\leq 2\ell$, and no two share the same starting
position~\cite{BaIInNaTaTs14}. In the following we reduce this bound to
$1.5\ell$. The next lemma shows that the situation in
Lemma~\ref{lem:BR>LR}.\ref{lem:BR_after} is met by at most one $\RR$.

\begin{lemma}\label{lem:1BR_after}
For any word and any interval $[i\dd j]$ on it, there exists at most one run
that has its $\RR$ starting after position $j$ while its $\LR$ corresponds to
an interval inside $[i\dd j]$.
\end{lemma}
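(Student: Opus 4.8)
The plan is to argue by contradiction: suppose two distinct runs $r$ and $r'$ both fall under the situation of Lemma~\ref{lem:BR>LR}.\ref{lem:BR_after}, so that for each of them the $\LR$ ends exactly at position $j$ and the $\RR$ starts exactly at position $j+1$. By Lemma~\ref{lem:BR>LR}.\ref{lem:BR_after} the $\LR$ of $r$ is a factor $w[j-p_r+1\dd j]$ and the $\LR$ of $r'$ is $w[j-p_{r'}+1\dd j]$, where $p_r$ and $p_{r'}$ are the periods of the two runs; since distinct runs have distinct periods we may assume $p_r < p_{r'}$. Thus the Lyndon word of length $p_r$ is a proper suffix of the Lyndon word of length $p_{r'}$, both ending at $j$.

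The key step is to derive a contradiction from this nesting of two Lyndon roots that share a right endpoint. First I would use the defining property of the $\RR$: for both runs, the letter $w[j+1]$ following them is smaller than the letter one period earlier, i.e. $w[j+1] < w[j-p_r+1]$ and $w[j+1] < w[j-p_{r'}+1]$. Together with the fact that each $\RR$ is the period-length prefix of the greatest proper suffix of its run factor (equivalently, $w[j-p+1\dd j]$ is the greatest conjugate of the root, hence a Lyndon word for $\widetilde<$, equivalently border-free), I would compare the two suffixes $w[j-p_r+1\dd j]$ and $w[j-p_{r'}+1\dd j]$. Because the shorter one is a suffix of the longer one and the longer one is border-free (being a Lyndon root), the shorter Lyndon root cannot be a suffix of it — a border-free word of length $p_{r'}$ cannot have a border of length $p_r$, and the shorter Lyndon root being a suffix ending at $j$ of period-root structure would force exactly such a border. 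More carefully: the longer run $r'$ has period $p_{r'}$ and contains a full period around position $j-p_r$, so $w[j-p_r+1\dd j]$ equals a conjugate of the root of $r'$; but $w[j-p_r+1\dd j]$ is itself primitive of period $p_r < p_{r'}$, contradicting primitivity of the root of $r'$ unless the two periods coincide.

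The cleanest route, and the one I would push through, is the border argument: the length-$p_r$ factor $w[j-p_r+1\dd j]$ occurs as a suffix of the length-$p_{r'}$ factor $w[j-p_{r'}+1\dd j]$, and I claim it also occurs as a prefix (shifted by $p_{r'}$ within run $r'$, using that $r'$ has period $p_{r'}$ and is long enough — at least $2p_{r'}$ — so that position $j-p_r+1$ shifted back by $p_{r'}$ still lies in $r'$). That would make $w[j-p_r+1\dd j]$ a border of the Lyndon root of $r'$, contradicting its border-freeness. The main obstacle I anticipate is checking carefully that the relevant shifted positions all lie within run $r'$ and within the interval, so that the periodicity of $r'$ genuinely applies; this is a routine but delicate index chase using $|r'| \geq 2p_{r'}$ and the fact (Remark~\ref{rem:roots}) that the $\LR$ starts within the first $p_{r'}$ positions of $r'$, combined with the hypothesis that the $\LR$ of $r'$ ends at $j = i+\ell-1$.
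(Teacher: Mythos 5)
Your overall strategy is the same as the paper's: reduce to two Lyndon roots ending at the same position $j$, with the shorter a proper suffix of the longer, and then contradict border-freeness of the longer Lyndon root. The paper's own proof is exactly this one-line border argument. However, the specific way you propose to establish the missing half of the border --- that the shorter root is also a \emph{prefix} of the longer one --- would fail. You shift the occurrence of $w[j-p_r+1\dd j]$ \emph{back} by $p_{r'}$, landing at position $j-p_r+1-p_{r'}=i_\rL'-p_r$, and claim this still lies in $r'$. It does not: the hypothesis that the $\RR$ starts at $j+1$, combined with Remark~\ref{rem:roots} ($i_\rO\leq i_{r'}+p_{r'}$) and $i_{r'}\leq i_\rL'=j-p_{r'}+1$, forces $i_{r'}=i_\rL'$ exactly, so the run $r'$ begins precisely at its Lyndon root and any backward shift exits it. The periodicity of $r'$ therefore cannot be invoked at the position you need, and this is not a ``routine index chase'' that happens to work out --- it breaks in precisely the configuration the lemma is about. (Your middle paragraph's alternative, that a length-$p_r$ factor inside $r'$ would be ``a conjugate of the root of $r'$,'' is also not sound, since conjugates of that root have length $p_{r'}$; but you discard that route yourself.)

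The fix is to shift \emph{forward} instead. Since each run starts at its $\LR$ and has length at least twice its period, run $r$ covers $[j-p_r+1\dd j+p_r]$ with period $p_r$ and run $r'$ covers $[j-p_{r'}+1\dd j+p_{r'}]$ with period $p_{r'}$; hence $w[j+1\dd j+p_r]$ equals the shorter Lyndon root and $w[j+1\dd j+p_{r'}]$ equals the longer one (indeed these are exactly the two $\RR$s, both starting at $j+1$). Reading both at position $j+1$ shows the shorter root is a prefix of the longer, and since it is also a suffix, the longer Lyndon root is bordered --- the contradiction you were after. One further small point: ``distinct runs have distinct periods'' is false in general (two disjoint runs can share a period); here you need the extra observation that equal periods would make the two $\LR$s the same interval and hence, by maximality of runs, the two runs identical.
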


\begin{proof}
According to Lemma~\ref{lem:BR>LR}.\ref{lem:BR_after}, it must be the case
that $i_\rO=j+1$, while $i_\rL=j-p_r+1$, for any appropriate run $r$.
However, having more than one $\RR$ starting at position $j+1$ with the
factor corresponding to its $\LR$ as a suffix of the Lyndon word, would then
imply that the larger of $\LR$s that corresponds to a Lyndon word is
bordered, which is a contradiction.
\end{proof}

Now we are dealing with $\RR$s corresponding to
Lemma~\ref{lem:BR>LR}.\ref{lem:BR_before}.
%
%

\begin{proposition}
For a given word, any interval of length $\ell$ of positions on the word contains at
most $3\ell/2$ $\LR$s.
\end{proposition}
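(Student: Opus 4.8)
The strategy is to refine the count $2\ell$ down to $3\ell/2$ by a more careful accounting of which starting positions of $\RR$s can actually be ``used'' by runs whose $\LR$ lies inside $[i\dd j]$. Recall that, by the result of~\cite{BaIInNaTaTs14}, every position of the word is the starting position of at most one $\RR$, so it suffices to show that the $\RR$s of runs relevant to $[i\dd j]$ start inside a set of at most $3\ell/2$ positions. We already know (via Lemma~\ref{lem:BR>LR} and Lemma~\ref{lem:1BR_after}) that such an $\RR$ either starts at position $j+1$ — and this happens for at most one run — or it starts inside the interval corresponding to $w_{\rm p}w$, which has length $|w_{\rm p}|+\ell \le 2\ell-1$. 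So the whole game is to bound the number of relevant $\RR$s that start strictly before position $i$, i.e.\ inside $w_{\rm p}$, and to trade those against positions inside $w$ that are \emph{not} used.

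First I would set up the dichotomy from Lemma~\ref{lem:BR>LR}: for a run $r$ whose $\LR$ lies in $w=[i\dd i+\ell-1]$, either the $\RR$ also starts inside $w$ (the ``good'' case), or we are in case~\ref{lem:BR_after} (the unique run with $i_\rO=j+1$), or we are in case~\ref{lem:BR_before}, where $i_\rL=i$ and $i_\rO<i$. The key structural fact to extract is that a run in case~\ref{lem:BR_before} forces the period root to appear already as a prefix of $w$ — indeed $i_\rL=i$ means the Lyndon root of that run is a prefix of the Lyndon word $w$, and since (by Remark~\ref{lem:i_L-before}) $i_\rO<i$ together with $r_\rL$ occurring at position $i$ means the run itself starts at $i_\rL=i$ with $i_\rL+p_r=i_\rO$ — wait, that would put $i_\rO>i$, so in fact in case~\ref{lem:BR_before} the $\RR$ is \emph{not} a Lyndon word, i.e.\ the first case of the $\RR$ definition applies: the letter following the run is smaller than the letter $p_r$ positions earlier, and $r_\rO$ is the first occurrence of a non-prefix Lyndon root inside $r$. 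The plan is to use this to show that two distinct runs in case~\ref{lem:BR_before} cannot have $\RR$s starting at two positions that are ``too close'': because each such run starts at or before $i_\rO<i$ and ends at or after $i+\ell-1$, it has period $p_r>i-i_\rO$, so it spans all of $w$; combined with the fact that $w$ is Lyndon and its length-$p_r$ prefix is therefore $<$ every conjugate, one gets constraints forcing the corresponding $\RR$ starting positions to be spread out, roughly one per block of length comparable to a period. The cleanest version I would aim for: if there are $t$ runs in case~\ref{lem:BR_before}, their distinct periods $p_1<\dots<p_t$ all exceed $i-i_\rO\ge 1$, the runs all contain $w$, and each occupies the prefix of length $i$ of $w_{\rm p}w$ in a way that pins down $i_\rO$ as a function of the run; a counting argument (each such $\RR$ ``kills'' a would-be position inside $w$) should give that the number of case-\ref{lem:BR_before} runs plus the number of unused positions inside $w$ is at least the number of case-\ref{lem:BR_before} runs, balancing to the $\ell/2$ slack.

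Concretely, the arithmetic I expect is: total relevant $\RR$s $\le (\text{positions strictly inside }w\text{ that are used}) + (\text{case \ref{lem:BR_after}: }\le 1) + (\text{case \ref{lem:BR_before}})$, and the claim reduces to showing $(\text{case \ref{lem:BR_before} count}) \le \ell/2 + (\text{unused interior positions of }w)$. To see the case-\ref{lem:BR_before} runs are few, note each has $i_\rL=i$, hence its Lyndon root is a prefix of $w$; two runs with $i_\rL=i$ and periods $p<q$ would give two prefixes of $w$, of lengths $p$ and $q$, both Lyndon and both conjugates of their respective run roots — and since a run factor of period $p$ starting at $i$ contains $w[i\dd i+p-1]$ as its Lyndon root only if that length-$p$ prefix is itself Lyndon, the prefixes of $w$ of lengths $p_1<\dots<p_t$ are all Lyndon words; then the classical fact that the Lyndon prefixes of a Lyndon word have lengths that are not too dense (e.g.\ consecutive ones at least double, or at any rate there are $O(\log\ell)$ of them) would even give a much stronger bound than needed — but since a clean $3\ell/2$ is the target, I would instead argue directly that each case-\ref{lem:BR_before} run's $\RR$ starts at a position $i_\rO = i_r + \text{(gap to first non-prefix Lyndon root)} \le i - 1$, occupying one of only $|w_{\rm p}|$ positions, and pair this against the fact that when $i_\rL=i$ the position $i$ of $w$ is ``not needed'' as an $\RR$ start for \emph{that} run.

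\textbf{Main obstacle.} The delicate point is making the pairing/charging argument airtight: one must show that the $\RR$ starting positions inside $w_{\rm p}$ (case~\ref{lem:BR_before}) are in bijective correspondence with distinct runs, that these runs force distinct ``wasted'' interior positions of $w$, and that no position is double-counted between the $w_{\rm p}$ part and the $w$ part. The interplay between ``$\LR$ starts at $i$'' for several runs of different periods — each making the prefix of $w$ of its period length a Lyndon word — and the resulting rigidity of where their $\RR$s can sit is where the Lyndon/border-free combinatorics (and possibly the Periodicity Lemma, as in Lemma~\ref{lemm-cpsq}) must be invoked carefully; everything else is bookkeeping against the $|w_{\rm p}|+\ell+1 \le 2\ell$ baseline.
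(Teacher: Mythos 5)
Your setup matches the paper's: you correctly reduce the problem to the three cases of Lemma~\ref{lem:BR>LR} (at most $\ell$ $\RR$s starting inside $w$ by~\cite{BaIInNaTaTs14}, at most one starting at $j+1$ by Lemma~\ref{lem:1BR_after}), so that everything hinges on showing that the runs of case~\ref{lem:BR_before} --- those with $i_\rL=i$ and $i_\rO<i$ --- number at most about $\ell/2$. But that is exactly the step you never actually prove, and both mechanisms you gesture at are unsound. The ``classical fact'' that the Lyndon prefixes of a Lyndon word double in length, or that there are $O(\log\ell)$ of them, is false: $ab^k$ is a Lyndon word with $k+1$ Lyndon prefixes $a, ab, ab^2,\dots,ab^k$, so counting Lyndon prefixes of $w$ cannot by itself beat the trivial linear bound. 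Your fallback --- that these $\RR$s occupy at most $|w_{\rm p}|\le\ell-1$ positions --- only reproduces the $2\ell$ baseline, and the charging of each such run against a ``wasted'' interior position of $w$ is never constructed; you acknowledge this yourself. (You also invert the two branches of Definition~1: when $i_\rO<i_\rL$ the $\RR$ is \emph{not} a Lyndon root occurrence but the period-length prefix of the greatest proper suffix, by Remark~\ref{lem:i_L-before}; this does not affect the count but signals the structural fact you are missing.)

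The paper's argument for the $\ell/2$ bound is short and concrete, and rests on a property of $\LR$s you do not use: since an $\LR$ is the \emph{first} occurrence of the Lyndon root inside the run, it is immediately followed by its own first letter, and (being Lyndon) its last letter is larger than its first. For a case-\ref{lem:BR_before} run of period $p$ the $\LR$ is the prefix $w[i\dd i+p-1]$, so $w[i+p-1]>w[i]$ and $w[i+p]=w[i]$. Distinct such runs have distinct periods, and two periods $p<q$ cannot satisfy $q=p+1$, since that would force $w[i+p]$ to be simultaneously equal to and greater than $w[i]$; hence the marked length-two factors are disjoint and there are at most $\ell/2$ (in fact fewer than $(\ell-1)/2$, since $|w_{\rm p}|\le\ell-1$) case-\ref{lem:BR_before} runs. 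Summing $\ell+1+\ell/2$ gives the claimed bound. Without this (or an equivalent) observation tying the period of each case-\ref{lem:BR_before} run to a letter comparison inside $w$, your proposal does not close.
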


\begin{proof}
Let us denote once more our interval by $w=[i\dd i+\ell-1]$ and the interval
preceding it by $w_{\rm p}$.
We know from the definition that an $\LR$ is the first Lyndon root of a run,
and therefore the letter ending every $\LR$ must be greater than its first
letter, while the one on the position right after the end of the $\LR$ must
be the same as the first letter of the $\LR$.

Since for any $\RR$ starting in the interval associated with $w_{\rm p}$, the
$\LR$ corresponding to it in $w$ starts at position $i$, we note that these
can be bounded by the number of length two factors in $w$ that have the
letter on the first position larger than the letter on position $i$, while
the second one identical. If the second letter of such a factor is smaller
than the letter on position $i$, than this situation would make it impossible
for a $\LR$ to start on position $i$ and end before this position (this is
because an $\LR$ is the first Lyndon root occurrence of a run).

Since this number is obviously bounded by $\frac{\ell}{2}$, while the whole
length of $w_{\rm p}$ is bounded by $\ell-1$, (by considering the symmetric
situation) we conclude that there are less than $\frac{\ell-1}{2}$ runs that
start in $w_{\rm p}$ such that their corresponding $\RR$s start before
position $i$, while their $\LR$s start at position $i$ (the $\LR$s correspond
to prefixes of $w$). Hence, combining this with the fact that within $w$ we
have at most $\ell$ $\RR$s starting there, see~\cite{BaIInNaTaTs14}, and
since according to Lemma~\ref{lem:1BR_after} there is at most one $\RR$
starting after position $j$ that has the $\LR$ in $w$,  we get an upper bound
for our problem.
\end{proof}

The bound given in the above proposition is not really tight. On this point
let us complete the conjecture:

\begin{conjecture} 
For a given word, any interval of length $\ell>0$ of positions on the word contains
at most $\ell$ $\LR$s, and the maximum number is obtained only
when the factor corresponding to the interval is of the form
$a(ab)^{\frac{\ell-2}{2}}b$, where $\ell>3$, and the letters $a$ and $b$ satisfy $a<b$.
\end{conjecture}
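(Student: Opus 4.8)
The plan is to prove both halves of the conjecture by sharpening the machinery already built for the $1.5\ell$ bound, combined with the structural observations made in the remarks following Proposition~\ref{prop:lower}. By Remark~\ref{rem:interv} it suffices to treat an interval $[i\dd j]$ whose corresponding factor is a Lyndon word $w$ of length $\ell$; and by the remarks on $w_{\rm p}$ and $w_{\rm s}$, every contributing run starts inside $w_{\rm p}ww_{\rm s}$. The key quantitative input is that the map $r\mapsto i_\rO$ is injective (the $\RR$ starting positions are pairwise distinct, \cite{BaIInNaTaTs14}), so I will bound the number of $\LR$s in $[i\dd j]$ by controlling the range of $\RR$ starting positions that can correspond to an $\LR$ lying inside $w$. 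Lemma~\ref{lem:BR>LR} partitions the runs whose $\RR$ leaves $w$ into two types: those with $\RR$ starting at $j+1$ (at most one, by Lemma~\ref{lem:1BR_after}), and those with $\RR$ starting before $i$ and $\LR$ a prefix of $w$ beginning at position~$i$.

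First I would handle the runs whose $\RR$ starts inside $w$: there are at most $\ell$ of these, but I expect to improve this to at most $\ell-1$, or more, by exploiting that $w$ is Lyndon. Indeed the last position $j=i+\ell-1$ cannot start an $\RR$ whose $\LR$ lies in $w$ with length $p_r\ge 1$ and ends by $j$ while being a proper suffix of $w$ that is larger than $w$ — the Lyndon property forces the suffix condition to fail at the final letter. The main work is then the ``$\RR$ before $i$'' class. For such a run, the $\LR$ is a prefix $w[i\dd i+p_r-1]$ of $w$, the letter $w[i+p_r]$ equals $w[i]$ (first letter of the $\LR$), and $w[i+p_r-1]>w[i]$. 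Crucially, since $w$ is Lyndon with first letter $a:=w[i]$, every such $p_r$ is a position where $w$ has the pattern ``letter $>a$ followed by letter $=a$''; and because $w$ is Lyndon these occurrences of $a$ after a larger letter are quite constrained. I would show that two distinct runs of this class with the same such $p_r$ is impossible (their $\LR$s would coincide as prefixes of $w$, hence the runs coincide), so the count of this class is at most the number of indices $t\in[1\dd\ell-1]$ with $w[t-1+i]>a$ and $w[t+i]=a$.

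The heart of the argument — and the step I expect to be the main obstacle — is to show that these two bounded quantities cannot simultaneously be large: if $w$ has many ``$>a$ then $a$'' transitions it must look locally like $(\text{block})a(\text{block})a\cdots$, and I want to leverage the Lyndon property plus the adjacency/non-overlap of consecutive Lyndon-root occurrences inside a run to force that the $\LR$-count is at most $\ell$, with equality pinning $w$ down. Concretely I would argue by induction on $\ell$: strip off the initial $a$ and analyse the Lyndon factorisation of the remainder, or equivalently use the standard right-to-left Lyndon decomposition; a run contributing an $\LR$ inside $w$ must have its Lyndon root be one of the factors appearing, and counting occurrences of these factors against $\ell$ gives the bound, with the extremal case $w=a(ab)^{(\ell-2)/2}b$ emerging as the unique string that simultaneously maximises the ``$>a$ then $a$'' transition count (which is $(\ell-2)/2$) and realises all the $\RR$-slots inside $w$. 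The uniqueness claim for $\ell>3$ would follow by checking that any deviation from the alternating pattern either destroys a transition or makes $w$ non-Lyndon or makes some candidate Lyndon root bordered, each of which strictly drops the count below $\ell$; the small cases $\ell\in\{1,2,3\}$ are verified by hand as already noted in the text.

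Once the count is established as at most $\ell$, the equality analysis is the remaining piece: I would show that attaining $\ell$ forces exactly $(\ell-2)/2$ runs of the ``$\RR$ before $i$'' type, exactly one run of the ``$\RR$ at $j+1$'' type, and $\ell/2+1$ runs with $\RR$ inside $w$, and then back-solve the constraints on $w$. The non-overlapping-and-adjacent property of consecutive Lyndon-root occurrences within a single run (quoted in the excerpt) is what forces the period-$2$ structure $(ab)^{\ast}$ in the middle, while the boundary letters must be $a$ on the left (so the whole factor is Lyndon) and $b$ on the right (so the long period-$2$ run does not extend). I expect the bookkeeping here to be delicate but routine once the inductive bound is in place; the genuine difficulty is the inductive step, where controlling how the Lyndon factorisation of $w$ interacts with the run structure of the ambient word $w_{\rm p}ww_{\rm s}$ requires care, since runs can extend outside $w$ on either side.
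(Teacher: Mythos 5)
The statement you are trying to prove is stated in the paper as a \emph{conjecture}: the authors explicitly do not prove it, and the best bound they establish is $3\ell/2$ (the proposition preceding the conjecture). So there is no paper proof to compare against, and the real question is whether your proposal closes the gap from $1.5\ell$ to $\ell$. It does not. Your setup faithfully reproduces the paper's own accounting — at most $\ell$ runs with $\RR$ starting inside $w$ (injectivity of $r\mapsto i_\rO$), at most one with $\RR$ at $j+1$ (Lemma~\ref{lem:1BR_after}), and at most roughly $\ell/2$ with $\RR$ before $i$ — and that accounting is exactly what yields $1.5\ell$ and nothing better. The step where you would improve it is the one you yourself flag as ``the main obstacle'': showing that the ``$\RR$ inside $w$'' class and the ``$\RR$ before $i$'' class cannot both be near their individual maxima. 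For that you offer only an intention (``I would argue by induction on $\ell$: strip off the initial $a$ and analyse the Lyndon factorisation of the remainder \dots'') with no lemma, no invariant, and no indication of how the induction hypothesis interacts with runs that straddle the boundary of $w$ — which you also correctly identify as the delicate point. As written, every load-bearing claim is conditional (``I would show'', ``I expect''), so the proposal is a research plan, not a proof.

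Two concrete soft spots beyond the main gap. First, your claim that position $j=i+\ell-1$ cannot start an $\RR$ whose $\LR$ lies in $w$ is asserted via a vague appeal to the Lyndon property (``the suffix condition to fail at the final letter''); an $\RR$ of a unary run of period $1$ can perfectly well start at the last position of $w$, so this needs an actual argument or a restriction you have not stated. Second, your count of the ``$\RR$ before $i$'' class by the number of indices $t$ with $w[i+t-1]>a$ and $w[i+t]=a$ is essentially the paper's own bound of $(\ell-1)/2$ for that class; it buys you nothing new. Unless you can exhibit the trade-off lemma that couples the two classes, the proposal proves at most what the paper already proves, and the conjecture remains open.
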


We end this article with a few more observations regarding the results
from~\cite{BaIInNaTaTs14}, when we restrict ourselves to binary words. First
we recall a property of $\RR$s:
%

\begin{lemma}[Bannai et al.~\cite{BaIInNaTaTs14}]\label{lem:BR-overlaps}
If two different $\RR$s obtained considering the same order overlap, then
their overlap is the shortest of the $\RR$s.
\end{lemma}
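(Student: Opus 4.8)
The plan is to recall what an $\RR$ is and analyze how two of them can intersect, reducing the situation to the Periodicity Lemma. Suppose $r_1\mathbin{:}=[i_1\dd j_1]$ and $r_2\mathbin{:}=[i_2\dd j_2]$ are two distinct runs whose $\RR$s are both chosen according to the ordering $<$ (the reverse case being symmetric), and suppose these $\RR$ intervals overlap. By Remark~\ref{rem:roots} each $\RR$ has length equal to the period of its run, say $p_1$ and $p_2$; without loss of generality assume $p_1 \le p_2$. Since the two runs are distinct we cannot have $p_1=p_2$ with the same $\RR$ occurrence (two $\RR$s never share a starting position by~\cite{BaIInNaTaTs14}), so we may in fact assume $p_1<p_2$, or $p_1=p_2$ with distinct starting positions — I would handle the strict case first and note the equal-period case is an easy specialisation.

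The key observation is that when the $\RR$ is defined via the "greatest proper suffix" branch of the definition, its length-$p_r$ prefix $w[i_\rO\dd i_\rO+p_r-1]$ is a greatest conjugate of the run root according to the relevant ordering, hence border-free, as already noted in the proof of Theorem~\ref{theo-runs}. In the other branch the $\RR$ is a Lyndon root, hence also primitive and border-free. So in all cases each $\RR$ is a border-free word of length equal to its period. Now assume towards a contradiction that the overlap of the two $\RR$s is \emph{longer} than the shorter $\RR$, i.e. longer than $p_1$. Place the two intervals on the line: the overlapping region $O$ lies inside both $\RR$s. Since $O$ sits inside an occurrence of the $\RR$ of $r_1$, the factor $w$ restricted to the whole of $\RR_1$ (and in particular to $O$) inherits period $p_1$ — more carefully, the whole run factor $r_1$ has period $p_1$, and $\RR_1$ lies inside $r_1$. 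Likewise the relevant occurrence governing $\RR_2$ forces $O$ to have period $p_2$: here I would use that $\RR_2$ lies inside the run factor $r_2$ of period $p_2$. Thus $O$ has both periods $p_1$ and $p_2$.

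If $|O| \ge p_1 + p_2$, the Periodicity (Fine–Wilf) Lemma gives $O$ a period $\gcd(p_1,p_2)$ strictly smaller than $p_2$, and a suitable length-$p_2$ factor of $O$ — which is exactly a conjugate of the root of $r_2$, hence the $\RR_2$ factor itself up to rotation — would then be non-primitive or bordered, contradicting border-freeness of $\RR_2$. The subtle range is $p_1 < |O| < p_1 + p_2$: here Fine–Wilf does not apply directly, and I expect this to be the main obstacle. The remedy is that $|O| > p_1$ already suffices to exhibit a border: if $O$ has period $p_1$ and $|O| > p_1$, then the length-$p_1$ prefix and length-$p_1$ suffix of $O$ are equal (both equal the period word of $O$ up to the cyclic shift imposed by $p_1$), so any occurrence of $\RR_1$ that strictly contains more than $p_1$ letters in common with its neighbour is… — more precisely, I would argue directly that the prefix of length $|O|-p_1 \ge 1$ of $\RR_1$ equals a suffix of $\RR_1$, giving a border of $\RR_1$ and contradicting its border-freeness; symmetrically one gets a border of $\RR_2$ when the overlap exceeds $p_1$ from the $\RR_2$ side. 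One of these two border arguments must apply once $|O|>p_1$, which is the desired contradiction. Hence $|O| \le p_1$, i.e. the overlap is contained in — and by a final length comparison equals — the shorter of the two $\RR$s, completing the proof. The one bookkeeping point to get right is that "$\RR$ lies inside the run factor, which has the run's period" is genuinely available for \emph{both} branches of the $\RR$ definition; for the Lyndon-root branch this is immediate, and for the greatest-suffix branch it follows because the greatest proper suffix of the run factor still has the run's period and the length-$p_r$ prefix sits inside it.
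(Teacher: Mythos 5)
The paper imports this lemma from Bannai et al.\ and gives no proof of its own, so there is nothing internal to compare your argument against; judged on its own terms, your proof does not work. The fatal problem is that your contradiction hypothesis is vacuous: you assume ``the overlap of the two $\RR$s is longer than the shorter $\RR$, i.e.\ longer than $p_1$,'' but the intersection of two intervals can never exceed the length of the shorter interval, so $|O|\le p_1$ holds trivially and everything derived from its negation establishes nothing. In particular the Fine--Wilf step can never fire (it would require $|O|\ge p_1+p_2-\gcd(p_1,p_2)>p_1$), and the closing sentence ``hence $|O|\le p_1$, i.e.\ the overlap \ldots{} by a final length comparison equals the shorter of the two $\RR$s'' is a non sequitur: $|O|\le p_1$ does not give $|O|=p_1$, and even $|O|=p_1$ is not the hard part. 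The actual content of the lemma is that a \emph{staggered} overlap is impossible --- one cannot have $i_1<i_2\le j_1<j_2$ with neither interval containing the other --- and your argument never engages with that configuration.

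Note also that the properties you do invoke (each $\RR$ is border-free / a Lyndon word for the common order, and each run factor carries its period) are not by themselves enough to exclude a staggered overlap: in $\texttt{aabb}$ the Lyndon words $\texttt{aab}$ at $[0\dd 2]$ and $\texttt{abb}$ at $[1\dd 3]$, both Lyndon for $\texttt{a}<\texttt{b}$ and hence border-free, overlap in $\texttt{ab}$, which is neither of them. A correct proof must exploit the run structure around each root --- each $\RR$ is flanked inside its run by further adjacent conjugate copies of itself, so a staggered overlap can be extended to a window carrying both periods that is long enough for a periodicity or border argument --- or, as Bannai et al.\ do, identify the $\RR$ with the \emph{longest} Lyndon word starting at its position and invoke the standard laminarity of the family of such intervals. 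Your observation that the overlap inherits both periods is the right kind of ingredient, but it must be applied to such an extended window, not to the intersection of the two roots themselves.
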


We observe that we can consider $\RR$s to be obtained according to a certain
order based on the letter that these $\RR$s start with (thus all $\RR$s
starting with $\texttt{a}$ are obtained according to the lexicographical
order, while the ones starting with $\texttt{b}$ are obtained according to
the inverse lexicographical one).

\begin{proposition}\label{lem:bound_order}
For a given binary word, any interval of length $\ell$ of positions on the word
contains at most $\frac{\ell-1}{2}$ $\RR$s obtained according to the same
order.
\end{proposition}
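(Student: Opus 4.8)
The plan is to show that along the word, the $\RR$s obtained according to one fixed order (say the lexicographic order, i.e.\ those $\RR$s starting with the letter $\texttt{a}$) cannot be packed too densely. The key structural input is Lemma~\ref{lem:BR-overlaps}: any two distinct same-order $\RR$s either are disjoint, or overlap in exactly the shorter one of the two; in particular no two of them can \emph{properly straddle} each other in a generic way. Combined with the fact from~\cite{BaIInNaTaTs14} that no two $\RR$s (same order or not) share a starting position, this means that when we sort the same-order $\RR$s by starting position within the interval $[i\dd j]$, consecutive ones are ``almost disjoint'': the second starts strictly to the right of the first's start, and whenever they overlap the overlap equals the entirety of the shorter one.

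First I would set up the combinatorial bookkeeping: list the same-order $\RR$s whose starting positions lie in $[i\dd j]$ as $\rho_1,\rho_2,\dots,\rho_m$ ordered by starting position $s_1<s_2<\dots<s_m$ (strict, by the no-shared-start property). The claim to establish is $m\le (\ell-1)/2$ where $\ell=j-i+1$. I would argue that each step from $s_t$ to $s_{t+1}$ must advance by at least $2$, except possibly for a bounded correction, so that $s_m - s_1 \ge 2(m-1)$ forces $m-1 \le (\ell-1)/2$, i.e.\ $m\le (\ell+1)/2$; then a parity/endpoint refinement (the last $\RR$ extends $p_{\rho_m}\ge 2$ positions beyond $s_m$, and an $\RR$ starting with $\texttt{a}$ forces a specific letter pattern $\texttt{a}\cdots$ of length $\ge 2$) tightens this to $(\ell-1)/2$. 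The heart of the ``advance by $2$'' step is: if $s_{t+1}=s_t+1$, then $\rho_t$ and $\rho_{t+1}$ overlap, so by Lemma~\ref{lem:BR-overlaps} the shorter is contained in the longer shifted by one position, which makes that shorter $\RR$ a word with period $1$ over a binary alphabet after accounting for the neighbouring letters — equivalently it forces a border on a Lyndon-type factor — contradicting border-freeness of $\RR$s (each $\RR$ is a conjugate realizing a Lyndon root, hence border-free). So the step is never exactly $1$; it is at least $2$.

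The main obstacle I anticipate is handling the case $s_{t+1}=s_t+1$ cleanly together with longer overlaps: I must rule out not just adjacent-by-one $\RR$s but also the configuration where two same-order $\RR$s of \emph{different} lengths overlap in the shorter one while their starts differ by an odd amount less than the shorter length — I expect Lemma~\ref{lem:BR-overlaps} plus the synchronization property of occurrences of a Lyndon root within a run (consecutive occurrences are adjacent and non-overlapping, as recalled in the Lyndon-roots section) to pin down the possible start differences to multiples of the relevant period, and then binarity forces the period to be the full $\RR$ length, giving disjointness unless the starts coincide (excluded). Carrying this through rigorously, including the boundary contributions at positions $i$ and $j$ and the exact parity accounting, is the technical core; the rest is arithmetic. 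Once the ``gap $\ge 2$'' statement is nailed down, the bound $\frac{\ell-1}{2}$ follows by the counting in the previous paragraph.
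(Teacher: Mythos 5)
Your overall skeleton --- order the same-order $\RR$s by starting position and show that consecutive starts differ by at least $2$ --- is in the spirit of the paper's argument, but both steps on which your bound rests are left unestablished, and the justifications you sketch for them do not work. First, the ``gap $\geq 2$'' step. You derive a contradiction from $s_{t+1}=s_t+1$ by claiming that Lemma~\ref{lem:BR-overlaps} forces the shorter $\RR$ to acquire period $1$ or a border. It does not: one Lyndon word can perfectly well occur inside another at offset $1$ (e.g.\ $\texttt{ab}$ inside $\texttt{aab}$), and this creates neither a period-$1$ factor nor a border of either word. What actually rules out adjacent starts --- and is the engine of the paper's proof --- is a letter-boundary observation you never isolate: a non-unary $\RR$ taken with respect to the lexicographic order starts with $\texttt{a}$, ends with $\texttt{b}$, and must be immediately \emph{preceded} by $\texttt{b}$ and \emph{followed} by $\texttt{a}$ (otherwise another rotation of the root has an extra $\texttt{a}$ in its longest unary prefix, contradicting that the interval is an occurrence of the Lyndon root), while unary $\texttt{a}$-$\RR$s sit in distinct maximal $\texttt{a}$-blocks separated by $\texttt{b}$'s. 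Two same-order $\RR$s therefore cannot start at consecutive positions, and --- this is how the paper handles your worry about overlaps of different lengths --- two nested ones must differ in length by at least $2$, since both endpoints of each are pinned to a $\texttt{b}$-to-$\texttt{a}$ and an $\texttt{a}$-to-$\texttt{b}$ boundary. The paper combines this with Lemma~\ref{lem:BR-overlaps} (same-order $\RR$s are nested or disjoint) and adds up lengths and counts over adjoining blocks.

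Second, the refinement from roughly $\ell/2$ down to $\frac{\ell-1}{2}$ is only gestured at. Your proposed endpoint correction (``the last $\RR$ extends $p\geq 2$ positions beyond $s_m$'') is false for unary $\RR$s, whose length is $1$, and even for non-unary ones it improves the count only to $\lfloor \ell/2\rfloor$, which still exceeds $\frac{\ell-1}{2}$ when $\ell$ is even. The paper closes this last gap by an extremal analysis: attaining the maximum would force the factor to be of the form $(\texttt{ab})^{+}$, and then the nested same-order $\RR$s would be powers of one another, contradicting the primitivity of Lyndon words. Without the letter-boundary fact and this extremal step, the stated bound does not follow from your outline.
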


\begin{proof}
Without loss of generality we fix an order; let us say lexicographical.
Observe first, that for a word to correspond to an $\RR$, whenever they are
not binary, they must start with a letter $\texttt{a}$ and end with a
$\texttt{b}$ (as previously mentioned).  Furthermore, it must be the case
that this interval is preceded by a $\texttt{b}$ and followed by an
$\texttt{a}$, as otherwise it does not correspond to a Lyndon root (there
exists another rotation that has an extra $\texttt{a}$ in its longest unary
prefix).

Finally, observe that considering their relative position, following
Lemma~\ref{lem:BR-overlaps}, two such $\RR$s are either included one in the
other, or they are disjoint.

Now, considering two words corresponding to two $\RR$s, let us say $u$ and
$v$ with $u$ a factor of $v$, we note that, since their lengths are
different, following the initial conditions, they must differ by a length of
at least $2$, whenever $u$ is not unary (each starts between a $\texttt{b}$
and an $\texttt{a}$, and ends between an $\texttt{a}$ and a $\texttt{b}$).
For the unary case, note that every block of consecutive $\texttt{a}$'s must
be in-between two occurrences of $\texttt{b}$. Furthermore, we cannot have
two unary words corresponding to $\RR$s overlapping each other. Thus if the
position of the second $a$ is an $\RR$ in the word
$\texttt{ba}^{\ell}\texttt{b}$, for $\ell>0$, it is impossible to have a
length less than $3$ for any word starting with the first $a$ whose interval
corresponds to an $\RR$.

Given that for any two distinct adjoining $\RR$s both their lengths and the
number of $\RR$s they contain add up, the result follows in this case as
well.

In order to get the $-1$, we observe that for any word of length at least
$3$, for the interval it determines to have the maximum number of $\RR$s of
the same order, according to the previous facts, would imply the word to have
the form $(\texttt{ab})^+$. However, now, the $\RR$s would correspond to
words that are just powers of one another, contradicting their property of
being Lyndon words.
\end{proof}

Furthermore, denoting by $|w|_u$ the number of all (possibly overlapping)
occurrences of $u$ in $w$, as consequence of the above we have the following:

\begin{corollary}\label{cor:strict_bound_order}
Every length $\ell$ interval associated with a factor $w$ of a binary word
completely contains at most $\min\{|w|_{ab},|w|_{ba}\}$ $\RR$s that
correspond to non-unary factors and are obtained according to the same order.
\end{corollary}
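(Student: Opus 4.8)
The plan is to refine the counting argument from Proposition~\ref{lem:bound_order} by replacing the crude bound $\frac{\ell-1}{2}$ on the number of non-unary $\RR$s of one order by the sharper quantity $\min\{|w|_{ab},|w|_{ba}\}$, where now $w$ is the factor associated with the interval itself rather than a Lyndon word. First I would recall from the proof of Proposition~\ref{lem:bound_order} the two structural facts that do all the work here: every non-unary $\RR$ obtained according to the lexicographic order is a factor that starts with an occurrence of $\texttt{ab}$ (it begins with $\texttt{a}$, is immediately preceded by $\texttt{b}$, and — being non-unary — its second letter is forced; one checks it is $\texttt{b}$ exactly when the run has period $>1$, and the relevant boundary condition gives an $\texttt{ab}$ at its left end), and symmetrically it ends with an occurrence of $\texttt{ba}$ (it ends with $\texttt{b}$ followed by $\texttt{a}$, and its penultimate letter is $\texttt{a}$). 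Dually, for the inverse lexicographic order the roles of $\texttt{ab}$ and $\texttt{ba}$ swap, so it suffices to treat one order and the symmetric statement follows; but since the two occurrence counts $|w|_{ab}$ and $|w|_{ba}$ of a binary word differ by at most one and the bound is a minimum over them, the same numerical bound holds for either order.

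Next I would exploit Lemma~\ref{lem:BR-overlaps}: any two distinct $\RR$s of the same order are either nested or disjoint, so the $\RR$s contained in $[i\dd j]$ form a laminar family. Consider the maximal ones $R_1,\dots,R_m$ under inclusion; they are pairwise disjoint and each is a non-unary factor, hence contains a leading occurrence of $\texttt{ab}$ that is disjoint from the leading $\texttt{ab}$ of every other $R_t$ (disjointness of the intervals, together with the fact that the leading $\texttt{ab}$ sits strictly inside each $R_t$ away from the other $R_t$'s, prevents two leading $\texttt{ab}$'s from coinciding). So $m \le |w|_{ab}$, and symmetrically $m \le |w|_{ba}$ using trailing occurrences of $\texttt{ba}$. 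For the nested $\RR$s inside a single maximal $R_t$ one recurses: an $\RR$ strictly contained in $R_t$ is a proper factor, it too carries its own leading $\texttt{ab}$ and trailing $\texttt{ba}$ inside $R_t$, and because a nested pair differs in length by at least $2$ at each end (again by the boundary-letter constraints established in Proposition~\ref{lem:bound_order}), distinct nested $\RR$s use distinct leading $\texttt{ab}$-occurrences within $R_t$. Summing, the total number of non-unary $\RR$s of one order contained in $[i\dd j]$ is at most the number of $\texttt{ab}$-occurrences in $w$, and likewise at most the number of $\texttt{ba}$-occurrences, which gives the claimed $\min\{|w|_{ab},|w|_{ba}\}$.

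I expect the main obstacle to be the bookkeeping that turns ``each $\RR$ carries a leading $\texttt{ab}$'' into a genuine \emph{injection} from the set of non-unary $\RR$s of a fixed order into the set of $\texttt{ab}$-occurrences of $w$: one must verify that the chosen occurrence (say, the leftmost $\texttt{ab}$ inside each $\RR$, or perhaps the one at the very left boundary) is never shared by two distinct $\RR$s, and this is where the laminar structure from Lemma~\ref{lem:BR-overlaps} and the length-gap $\ge 2$ between a nested pair must be combined carefully — in particular one has to handle the degenerate case where an $\RR$ is so short that its leading $\texttt{ab}$ and trailing $\texttt{ba}$ overlap (length $3$, factor $\texttt{aba}$ is not an $\RR$ for a non-unary root of period $>1$, so this case is vacuous, but it needs to be noted), and the case of runs whose $\RR$ is \emph{unary}, which are explicitly excluded from the count and so may be discarded at the outset. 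Once the injection is in place, the statement is immediate from symmetry in $\texttt{a}\leftrightarrow\texttt{b}$ and the fact that reversing the order exchanges the two occurrence counts.
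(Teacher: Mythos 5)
The paper states this corollary without proof (``as consequence of the above''), so your attempt must stand on its own, and it has a genuine gap in the $|w|_{ba}$ half of the bound. Your structural premises are too strong: a non-unary $\RR$ of the lexicographic order is a binary Lyndon word, so it begins with $\texttt{a}$ and ends with $\texttt{b}$, but its second letter is not forced to be $\texttt{b}$ (the Lyndon root $\texttt{aab}$ has second letter $\texttt{a}$, even though its run has period $3>1$) and its penultimate letter is not forced to be $\texttt{a}$ (take $\texttt{abb}$). The $|w|_{ab}$ half survives, because any factor starting with $\texttt{a}$ and ending with $\texttt{b}$ contains an \emph{internal} occurrence of $\texttt{ab}$, and the leftmost such occurrence is injective over a laminar family (Lemma~\ref{lem:BR-overlaps}) of $\RR$s with pairwise distinct starting positions. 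The $|w|_{ba}$ half does not: a non-unary lexicographic $\RR$ may contain no occurrence of $\texttt{ba}$ at all ($\texttt{ab}$ and $\texttt{aabb}$ are Lyndon roots), so the only candidate $\texttt{ba}$'s are the ones straddling its boundaries --- the preceding $\texttt{b}$ with its first $\texttt{a}$, or its last $\texttt{b}$ with the following $\texttt{a}$. Each of these has one position outside the $\RR$, hence possibly outside $w$ when the $\RR$ touches an end of the interval, and the ``followed by $\texttt{a}$'' property used in the proof of Proposition~\ref{lem:bound_order} can fail outright when the $\RR$ ends exactly at the end of its run (maximality then forces the next letter to be $\texttt{b}$). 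Concretely, in $x=\texttt{aabbaabb}$ the run $[0\dd 7]$ has its $\RR$ equal to $[4\dd 7]$, and the interval $[4\dd 7]$ with factor $w=\texttt{aabb}$ satisfies $|w|_{ba}=0$ while completely containing that non-unary lexicographic $\RR$; so no injection into $\texttt{ba}$-occurrences of $w$ can exist in general, and the statement itself needs a boundary caveat that your argument does not supply.

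Your fallback observation that $|w|_{ab}$ and $|w|_{ba}$ differ by at most one does not repair this: a bound by $|w|_{ab}$ never yields a bound by $\min\{|w|_{ab},|w|_{ba}\}$ when $|w|_{ba}=|w|_{ab}-1$, and swapping the order swaps which count is relevant but does not let you take the minimum for a \emph{fixed} order. To make the corollary (suitably interpreted) go through you would need to build an explicit injection from the non-unary same-order $\RR$s into the $\texttt{ba}$-occurrences of $w$ using the boundary letters, treating separately the at most one $\RR$ whose trailing $\texttt{ba}$ sticks out on the right and the ones anchored at the left end of the interval; none of that bookkeeping is present in your sketch.
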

%

\begin{corollary}\label{cor:unary_bound_order}
The number of $\RR$s associated with unary runs within every factor of a
binary word is at most one extra than the number of unary maximal blocks
within the factor (by a maximal block we refer to a unary factor that cannot
be extended either to the left or to the right without losing its
periodicity).
\end{corollary}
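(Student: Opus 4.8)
The plan is to make the statement completely explicit by first identifying the $\RR$ of a unary run, after which the bound reduces to a short counting argument. So the first step is the observation that a unary run $r$ occupies an interval $[s\dd s+k-1]$ with $k\ge 2$ over a single letter $c$; hence $p_r=1$, the run root is $c$, and so is its Lyndon root. Moreover, in a binary word the letter immediately following $r$, if it exists, is the other letter. Substituting this into the definition of the $\RR$, both branches give the same answer: in the first branch the first occurrence of the Lyndon root $c$ inside $r$ that is not the prefix occurrence is the position $s+1$; in the second branch the greatest proper suffix of $c^k$ is $c^{k-1}$, whose length-$1$ prefix again starts at $s+1$. Thus the $\RR$ of a unary run is simply the position immediately after its starting position, regardless of the ordering or of the letter following the run; in particular distinct unary runs receive distinct $\RR$s.

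For the second step, write the ambient binary word as $W$ and the fixed factor as $w = W[\alpha\dd\beta]$, and let $\mathcal R$ be the set of unary runs of $W$ whose $\RR$ lies inside $[\alpha\dd\beta]$. By the first step these are exactly the unary runs whose starting position lies in $[\alpha-1\dd\beta-1]$. I would split $\mathcal R$ into the at most one run that starts at position $\alpha-1$ and the runs that start at some position $s$ with $\alpha\le s\le\beta-1$. For a run $r$ of the latter kind, positions $s$ and $s+1$ both lie in $w$ and both carry the run letter, so inside $w$ they belong to a single maximal unary block of $w$, which therefore has length at least $2$; since $r$ cannot be extended leftward with period $1$ (or since $s=\alpha$), the left endpoint of this block equals $s$. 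Sending $r$ to this block is injective, because distinct unary runs of $W$ start at distinct positions and hence yield blocks with distinct left endpoints, and its image lies inside the set of unary maximal blocks of $w$. Consequently the runs of the latter kind are at most as many as the unary maximal blocks of $w$, and adding the at most one run starting at $\alpha-1$ gives the claimed bound.

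I expect the only delicate points to be the branch collapse in the first step and the identification of the left endpoint of the block at the boundary of $w$ in the second step; beyond that everything is routine. It is worth stressing why the extra $+1$ cannot be removed: the run starting at $\alpha-1$ (when present) has its $\RR$ at the very first position of $w$, which may lie in a leading block of $w$ of length $1$ and hence correspond to no unary maximal block of $w$ at all; for instance, in the binary word $\texttt{aabbbb}$ restricted to the factor starting at its second position, the factor is $\texttt{abbb}$, which has a single unary maximal block, yet two unary runs of $\texttt{aabbbb}$ place their $\RR$ inside it. This is precisely why the partition above, rather than one global injection, is the correct bookkeeping, and it also shows the bound is attained.
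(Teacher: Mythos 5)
Your argument is correct, and it is worth noting that the paper itself offers no proof of Corollary~\ref{cor:unary_bound_order} at all: the statement is left as an unargued consequence of the unary-case discussion buried in the proof of Proposition~\ref{lem:bound_order} (where the paper remarks, without elaboration, that ``the position of the second $a$'' of a block is the $\RR$). Your explicit computation makes that remark precise: for a unary run starting at position $s$ both branches of the definition of the $\RR$ collapse to the single position $s+1$ --- the first non-prefix occurrence of the one-letter Lyndon root, and equally the length-$1$ prefix of the greatest proper suffix $c^{k-1}$ --- and your subsequent bookkeeping (an injection from the runs starting inside the factor into the maximal blocks of the factor, pinned at the block's left endpoint by the maximality of the run or by the factor boundary, plus at most one run starting just before the factor to account for the ``$+1$'') is sound, since distinct unary runs are disjoint maximal blocks of the ambient word and hence start at distinct positions. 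Two cosmetic slips in your closing example only: the suffix of $\texttt{aabbbb}$ beginning at its second position is $\texttt{abbbb}$, so you should say you are taking the factor $W[1\dd 4]=\texttt{abbb}$; and $\texttt{abbb}$ has a single unary maximal block only under the reading of the parenthetical in the statement by which blocks must be genuinely periodic, i.e.\ of length at least $2$ (under the laxer reading it has two, and the instance is then not tight, though the bound still holds). Neither affects the validity of the proof.
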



\section{Acknowledgement}

We would like to warmly thank Gregory Kucherov, Hideo Bannai, and Bill Smyth
for helpful discussions on the subject. The work of Robert Merca\c{s} was
supported by the P.R.I.M.E. programme of  DAAD with funds provided by the
Federal Ministry of Education and Research (BMBF) and the European Union's
Seventh Framework Programme for research, technological development and
demonstration (grant agreement no. 605728).

\bibliographystyle{abbrv}
\bibliography{runs_Lroots}
\end{document}